\newcommand{\tabincell}[2]{\begin{tabular}{@{}#1@{}}#2\end{tabular}}
\newtheorem{theorem}{Theorem}
\newtheorem{lemma}{Lemma}
\DeclareMathOperator*{\maximize}{maximize}
\DeclareMathOperator*{\subj}{subject\,to}
\begin{document}
\title{On Energy Efficient Uplink Multi-User \\MIMO with Shared LNA Control}
\author{Zehao~Yu,%~\IEEEmembership{Student Member,~IEEE,}
	~Cong~Shen,%~\IEEEmembership{Senior Member,~IEEE,}
        ~Pengkai~Zhao, %~\IEEEmembership{Member,~IEEE,}
        and~Xiliang~Luo%,~\IEEEmembership{Senior Member,~IEEE}
\thanks{Z. Yu and C. Shen are with the Laboratory of Future Networks, School of Information Science and Technology, University of Science and Technology of China. P. Zhao is with Qualcomm, Santa Clara, CA, USA. X. Luo is with School of Information Science and Technology, ShanghaiTech University.}        
}
\maketitle

\begin{abstract}
Implementation cost and power consumption are two important considerations in modern wireless communications, particularly in large-scale multi-antenna systems where the number of individual radio-frequency (RF) chains may be significantly larger than before. In this work, we propose to deploy a single low-noise amplifier (LNA) on the uplink multiple-input-multiple-output (MIMO) receiver to cover all antennas. This architecture, although favorable from the perspective of cost and power consumption, introduces challenges in the LNA gain control and user transmit power control. We formulate an energy efficiency maximization problem under practical system constraints, and prove that it is a constrained quasi-concave optimization problem. We then propose an efficient algorithm, \emph{Bisection -- Gradient Assisted Interior Point (B-GAIP)}, that solves this optimization problem. The optimality, convergence and complexity of B-GAIP are analyzed, and further corroborated via numerical simulations. In particular, the performance loss due to using a shared LNA as opposed to separate LNAs in each RF chain, when using B-GAIP to determine the LNA gain and user transmit power, is very small in both centralized and distributed MIMO systems. %, which suggests that the shared LNA receiver structure may be of potential for

\end{abstract}

%%%%%%%%%%%%%%%%%%%%%%%%%%%%%%%%%%%%%%%%%%%% Introduction %%%%%%%%%%%%%%%%%%%%%%%%%%%%%%%%%%%%%%%%%%%%
\section{Introduction}
\label{sec:intro}

%% Energy efficiency
%% Pengkai: add some material from industry perspective on the importance of energy efficiency
Energy efficiency of communication systems is of significant practical importance and has become a hot research topic in both academia and industry. This is mainly due to the increasing global energy demand and the requirement of green radio \cite{chen2011fundamental}. In addition, despite the significant development of battery technology, it has not been able to fully keep pace with the practical demand from portable devices such as smartphones and tablets \cite{lahiri2002battery}. From the operators' perspective, reducing both the operation cost and carbon dioxide emissions \cite{akbari2010energy} is  becoming essential to their business bottom line. As a result, research on energy efficient wireless communications has been prolific over the past decade \cite{onireti2013energy,Shi2016Energy,jang2014energy,Geraci2015Energy}.

%% Massive MIMO
The energy efficient design become especially crucial with the introduction of multiple-input-multiple-output (MIMO), particularly with the increased emphasis on \emph{massive} MIMO \cite{Marzetta:2010} in 5G standards \cite{Andrews:2014}. A typical system architecture of massive MIMO assigns a separate radio-frequency (RF) chain to each transmit or receive antenna. When the number of antennas is large, hardware cost and power consumption increase substantially, which has motivated extensive studies on massive MIMO with inexpensive hardware components such as low-resolution Analog-to-Digital Converter (ADC) \cite{Mollen:2017}, Digital-to-Analog Converter (DAC) \cite{Li:2017}, mixers and oscillators \cite{Bjornson:2014}. 

%% shared LNA
In this paper, we follow the same design philosophy and study an attractive low-complexity MIMO receiver structure, where a single low-noise amplifier (LNA) \cite{Zhao2017Energy} is used to cover all receive antennas at the base station (BS). This architecture has the benefits of reduced implementation cost and lower power consumption, compared to the separate LNA approach where each RF chain uses an independent LNA for gain control. The shared-LNA structure is previously used by multi-channel communications \cite{Zhao2017Energy} where signals from different channels are \emph{non-overlapping} in the frequency domain. This feature mostly relies on LNA's wider bandwidth and more relaxed saturation point compared to other RF components like ADC. Fortunately, we will show in this work that the shared-LNA structure can be adopted by (large-scale) multi-antennas even when they are using the same spectrum. One intuitive solution is to program down-conversion parameters of different receiver paths (especially the configuration of mixer and by-pass filters), so that within shared LNA, signals from these receiver paths are not overlapping in the frequency domain. %Other novel RF structures for shared-LNA can be further developed, which is an open topic in 5G RF domain. %Mathematically, the final effect of shared-LNA is a unified LNA gain applied on top of different Rx paths (as formulated in this paper).

However, reducing the number of LNAs also introduces some important design challenges. For the separate LNA receiver structure, each receive antenna will have an independent LNA to adjust the power of the received signal for further processing. This gain can be optimized based on the individual receive power of the RF chain, resulting in maximum flexibility. For the shared LNA structure, however, the single LNA gain control must accommodate all receive antennas. Hence, it is conceivable that performance degradation may occur if inappropriate power amplification happens on some RF chains, resulting in ADC overflow or underflow\footnote{In our work, similar to the existing literature, we also consider using low-resolution ADCs \cite{singh2009limits, mezghani2007ultra, mo2015capacity} following the LNA gain control, to reduce the power consumption and overall cost.}.

%% CA & DA
Intuitively, the performance disadvantage of shared LNA may be significant when the range of receive power values across all BS antennas is large, and the channel paths experience independent fading (both large- and small-scale). In this scenario, a single LNA may not satisfy the power amplification requirements for all antennas, resulting in a performance degradation. To evaluate this interesting case, we study both \emph{centralized} and \emph{distributed} MIMO layouts in this paper, and focus on the large-scale regime (i.e., massive MIMO). For the centralized layout, \cite{Marzetta:2010} studies the system capacity where the number of BS antennas approaches infinity while the number of single-antenna users remains fixed. In addition, a more complete asymptotic analysis where the number of BS antennas and/or users approaches infinity has been carried out in \cite{Guthy:2013}. Regarding the energy efficiency analysis, most of the studies are carried out addressing different aspects such as power allocation algorithm \cite{Zhao2013Energy}, transmit antenna selection \cite{li2013energy}, and link adaptation \cite{Miao2010Energy}. In the {distributed} MIMO layout, the BS antennas are spread out in the coverage area and connected to the BS via fiber or cable \cite{Ngo:2015}. This architecture has recently attracted a lot of research interest because of its potential in offering higher data rate \cite{Dai:2015}, owing to the reduced minimum access distance of users to the scattered BS antennas. The capacity of multi-user large-scale MIMO systems with distributed layout has been evaluated  in \cite{Dai:2015,Ngo:2015,Hosseini:2014}, but studies on its energy efficiency are limited \cite{chen2012energy}.

%% Innovations
In this paper, we study the energy efficient system design of an uplink multi-user MIMO (MU-MIMO) system deploying the shared LNA receiver structure, with both centralized and distributed MIMO layouts. More specifically, we focus on the joint optimization of shared LNA gain control and user transmit power control that can optimize the system energy efficiency, which is defined as the ratio between  spectral efficiency and overall energy consumption \cite{kwon1986channel}. We first formulate the energy efficiency optimization problem under realistic engineering constraints, and then show that it is a constrained quasi-concave optimization problem. An efficient algorithm, \emph{Bisection -- Gradient Assisted Interior Point (B-GAIP)}, is proposed and its optimality is proved.  Furthermore, we analyze its convergence and complexity with the help of an equivalent interpretation of B-GAIP. Numerical simulation results are provided to evaluate the benefits of shared LNA.  The main contributions of this paper are summarized as follows.
\begin{itemize}
\item We propose a shared LNA receive structure for uplink MU-MIMO systems, which has reduced implementation/operation cost and near-optimal energy efficiency.
\item We formulate the energy efficiency optimization problem by considering several practical constraints. To solve this problem, we transfer the original problem under a fixed LNA gain into a constrained quasi-concave optimization problem, and then prove its concavity  with respect to the LNA gain. These properties guarantee the feasibility and accuracy of our proposed solution.
\item We propose B-GAIP, which is a two-step algorithm that finds the optimal power vector and LNA gain. By using the combination of gradient assisted interior point and bisection search, the algorithm solves the  energy efficiency optimization problem in an efficient manner. We also show that the algorithm guarantees convergence to the global optimal solution and analyze its complexity.
%\item We show how the theoretical analysis can guide the practice via simulation results. Several conclusions are drawn with respect to the choices of cell radius, centralized or distributed layouts, small or large system dimensions and the trade-off between performance and cost, e.g., the deployment of shared or separate LNA structures. The properties of functions and the accuracy of algorithm are also verified.
\end{itemize}

The rest of this paper is organized as follows. Section \ref{sec:model} presents the system model. Section \ref{sec:problem} introduces the constraints and formulates the optimization problem. In Section \ref{sec:algorithm} we design and evaluate the proposed algorithm. Section \ref{sec:simulation} presents  comprehensive numerical simulations to evaluate the performance. Finally, conclusions are drawn in Section \ref{sec:conclusions}.

%% Notations
\textit{Notations:} Throughout this paper, vectors are written as bold letters $\mathbf{x}$, and can be either row or column and their dimensions will be explicitly stated when defined. Matrices are written as bold capital letters $\mathbf{A}$.
$\mathbf{x} \circ \mathbf{y}$ represents the Hadamard product of two vectors, and $\mathbf{A}^{H}$ denotes the Hermitian of $\mathbf{A}$.
$|| \cdot ||$ denotes the $l_2$ norm unless stated otherwise, and $|\mathcal{X}|$ denotes the cardinality of set $\mathcal{X}$.
%$\mathbb{E}\left[ \cdot  \right]$ is the expectation operator.
$[\mathbf{A}]_{ij}$ is the element at the $i$th row and $j$th column of the matrix $\mathbf{A}$.
$x \sim \mathcal{CN}(\bar{x}, \sigma^2)$ denotes a complex Gaussian random variable $x$ with mean $\bar{x}$ and variance $\sigma^2$.
diag($x_1$ , \ldots , $x_n$) denotes an $n\times{}n$ diagonal matrix with diagonal elements $x_1$ , \ldots , $x_n$.

%%%%%%%%%%%%%%%%%%%%%%%%%%%%%%%%%%%%%%%%%%%% System Model %%%%%%%%%%%%%%%%%%%%%%%%%%%%%%%%%%%%%%%%%%%%
\section{System Model}
\label{sec:model}

Consider an uplink single-cell MU-MIMO system with radius $R_0$. For the convenience of analysis, we assume a circular coverage area centered around the BS. In the system, $K$ user equipments (UEs) are randomly and uniformly distributed in the coverage area, and each UE is equipped with a single antenna. The BS deploys $M$ antennas, which may locate either entirely at the cell center (centralized MIMO) or  randomly and uniformly distributed in the coverage area and connect to the BS via fiber or cable (distributed MIMO) \cite{Ngo:2015}. We denote the set of all BS antennas as $\mathcal{M}$ and the set of UEs as $\mathcal{K}$, with cardinalities $|\mathcal{M}| = M$ and $|\mathcal{K}| = K$, respectively. Note that in both layouts, all the UEs are randomly and uniformly distributed over the cell. Also, in both centralized and distributed MIMO layouts, signals from all antennas will be jointly processed.

%Two BS antenna layouts, centralized MIMO layout and distributed MIMO layout, are studied in this work. Fig. \ref{fig:network} depicts an exemplary network topology with a normalized radius. In both layouts, all the UEs are randomly and uniformly distributed over the cell, while the BS antennas are located differently according to the layouts. In the centralized MIMO layout, all the $M$ BS antennas are deployed at the center of the coverage area, while in distributed MIMO layout, the BS antennas are also randomly and uniformly distributed over the whole cell. Note that in both centralized and distributed MIMO layouts, signals from all antennas will be jointly processed

Assume that all UEs simultaneously transmit data to the base station, the received vector at the BS can be written as
\begin{equation}
\label{eqn:received signal}
\mathbf{y}=\mathbf{GPx}+\mathbf{z},
\end{equation}
where $\mathbf{y} \in \mathbb{C}^{M \times 1}$ is the signal vector at the BS receive antennas and $\mathbf{z} \sim \mathcal{CN}(\mathbf{0},\sigma_N^2\mathbf{I}_M) \in \mathbb{C}^{M \times 1}$ is an additive white Gaussian noise (AWGN) vector with mean $\mathbf{0}$ and covariance $\sigma_N^2\mathbf{I}_M$, with $\mathbf{I}_M$ denoting the identity matrix with dimension $M$. $\mathbf{P}=$ diag($\sqrt{p_1}$ , \ldots , $\sqrt{p_K}$) is the real-valued diagonal transmit amplitude matrix, and $\mathbf{P}\mathbf{x} \in \mathbb{C}^{K \times 1}$ is the transmitted vector of the $K$ UEs.  $\mathbf{G}\in \mathbb{C}^{M \times K}$ is the channel matrix between $K$ UEs and $M$ BS antennas, whose elements is $g_{mk}\triangleq{}[\mathbf{G}]_{mk}$. The channel matrix $\mathbf{G}$ models the independent fast fading, geometric attenuation, and log-normal shadow fading. As a result, element $g_{mk}$ is given by
\begin{equation}
g_{mk}=h_{mk}\sqrt{\beta_{mk}},
\end{equation}
where $h_{mk}$ is the fast fading coefficient from the $k$th UE to the $m$th BS antenna, and it follows a circularly symmetric complex Gaussian distribution with zero mean and unit variance; $\sqrt{\beta_{mk}}$ represents the geometric attenuation and shadow fading which are assumed to be independent and constant over the coherent intervals. 

In this paper, we adopt the WINNER II path loss model in \cite{kyosti2007winner}, where the path loss in dB domain is
\begin{equation}
\label{eqn:path loss}
\beta_{mk}^{\text{dB}}=46+20\log{}_{10}(d_{mk})+V_{mk}.
\end{equation}
In model \eqref{eqn:path loss}, $d_{mk}$ is the distance from UE $k$ to BS antenna $m$ and $V_{mk}$ denotes the shadow fading which follows the log-normal distribution. The power decay can be written as $\beta_{mk}=10^{(-\beta_{mk}^{\text{dB}}/10)}$.

In order to reduce implementation cost and conserving energy,  we use a single LNA to amplify the received signals from all $K$ UEs at the BS, as opposed to the traditional one-LNA-per-RF-chain approach. The receiver structure is illustrated in Fig.~\ref{fig:LNA}, where a \emph{common} LNA is applied to amplify the signals of \emph{all} receive antennas. The gain of this common LNA is denoted as $\Omega^\text{dB}$ in the dB domain and $\Omega=10^{(\Omega^{\text{dB}}/10)}$. The amplified received signal vector can be written as
\begin{equation}
\tilde{\mathbf{y}}=\sqrt{\Omega}\mathbf{y}.
\end{equation}
We consider a finite range with discrete values for parameter $\Omega^\text{dB}$, i.e., $\Omega_\text{min}^\text{dB} \leqslant{} \Omega^\text{dB} \leqslant{} \Omega_\text{max}^\text{dB}$ and $\Omega^\text{dB}$ is an integer.

\begin{figure}
\centering
\includegraphics[width=0.45\textwidth]{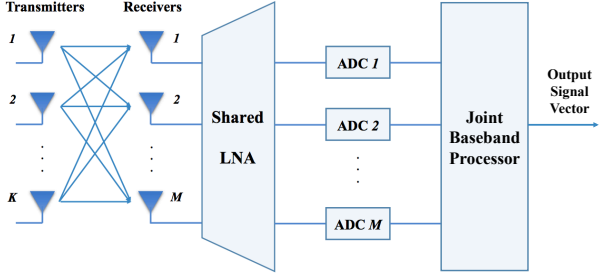}
\caption{The receiver structure with shared LNA control. Note that transmitters in the above illustration are from different uplink users. The receive antennas are for one BS and can be either co-located or distributed.}
\label{fig:LNA}
\end{figure}

After power amplification, each component of the signal vector will pass through an individual low-resolution ADC. We adopt the fixed ADC noise model\footnote{Note that under most of the ADC models, such as the additive quantization noise model (AQNM), the power of quantization noise changes with the power of ADC input signals. However, since LNA is used to control the power gain, it is convenient and appropriate to assume a fixed ADC noise \cite{Bertocco:2000,Zhao2017Energy,zhang2016mixed}.} as in \cite{Bertocco:2000}:
\begin{equation}
\hat{\mathbf{y}}=\tilde{\mathbf{y}}+\mathbf{n}_q,
\end{equation}
where the additive noise vector $\mathbf{n}_q\in \mathbb{C}^{M \times 1}$ is uncorrelated with the ADC input $\tilde{\mathbf{y}}$, and its elements are modeled as independent complex Gaussian random variables with zero mean and variance $\sigma_{\text{ADC}}^2$.

We assume that the BS has perfect knowledge of the CSI. By using a zero-forcing (ZF) detector $\mathbf{F} \triangleq{} (\mathbf{G}^{H}\mathbf{G})^{-1}\mathbf{G}^{H}$ (which requires $M \geqslant K$), the quantized signal vector $\hat{\mathbf{y}}$ is processed as follows:
\begin{equation}
\mathbf{r}=\mathbf{F}\hat{\mathbf{y}} = (\mathbf{G}^{H}\mathbf{G})^{-1}\mathbf{G}^{H} \hat{\mathbf{y}}.
\end{equation}
Since we have $\mathbf{F}\mathbf{G}=\mathbf{I}_K$, $\mathbf{r}$ is  given by
\begin{equation}
\mathbf{r}=\sqrt{\Omega}\mathbf{Px}+\sqrt{\Omega}\mathbf{F}\mathbf{z}+\mathbf{F}\mathbf{n}_q.
\end{equation}
Take the $k$th component of the vector $\mathbf{r}$ as an example, we have
\begin{equation}
\label{eqn:component}
r_k=\sqrt{\Omega{}p_k}x_k+\sqrt{\Omega}\mathbf{f}_k{}\mathbf{z}+\mathbf{f}_k\mathbf{n}_q,
\end{equation}
where $\mathbf{f}_k$ denotes the $k$th row of matrix $\mathbf{F}$. As a result, the signal-to-noise ratio (SNR) of the $k$th UE at the output of the BS receiver can be calculated as
\begin{equation}
\Gamma_{k}=\frac{\Omega{}p_{k}}{(\Omega\sigma_N^2+\sigma_{\text{ADC}}^2)\|\mathbf{f}_k\|^2}.
\end{equation}

By using SNR $\Gamma_{k}$, we define the spectral efficiency (SE) via modified Shannon capacity:
\begin{equation}
\label{eqn:se-def}
R_{k}=\left\{ \begin{array}{ll}
\log_2(1+A_d*\Gamma_{k}), & \Gamma_{k}<\Gamma_{\text{max}}\\
\log_2(1+A_d*\Gamma_{\text{max}}), & \Gamma_{k}\geqslant\Gamma_{\text{max}}\\
\end{array} \right.
\end{equation}
where $A_d$ denotes the coding gain and possibly multi-antenna diversity gain, which in practice is obtained via off-line fitting via link adaptation simulations. $\Gamma_{\text{max}}$ is the maximum achievable SNR at the receiver, which is often dominated by phase noise and IQ mismatch\footnote{IQ mismatch refers to phase and gain imbalance between in-phase (I) and quadrature (Q) paths \cite{fouladifard2003frequency}. For a given $\Gamma_{\text{max}}$ from RF impairment, the baseband demodulation capability is accordingly designed
with no extra demodulation gain when the input SNR is beyond $\Gamma_{\text{max}}$. Such observation motivates the usage of a SNR cap in Eqn.~\eqref{eqn:se-def}.}.

Finally, the energy efficiency is defined as the ratio between spectral efficiency and consumed power of the system \cite{kwon1986channel}. Note that in our model, and also in other literature \cite{Zhao2017Energy,Miao2010Energy}, the overall consumed power includes the circuit power and the transmit power. Therefore, the energy efficiency defined here is a system-level metric rather than that of only the tranceivers. As a result, we have:
\begin{equation}
\label{eqn:ee-def}
U(\mathbf{p}, \Omega)=\frac{\sum_{k=1}^{K}R_{k}}{P_c+\sum_{k=1}^{K}p_k/\eta},
\end{equation}
where $P_c$ denotes the circuit power of both the transmitters and the receivers, $\eta$ is the power amplifier efficiency, and $\mathbf{p}=[p_1,p_2,...,p_K]$ is the power allocation vector. Moreover, we define SE vector under configuration $\mathbf{p}$ and $\Omega$ as $\mathbf{R}=[R_{1},R_{2},...,R_{K}]$. Note that there is a one-to-one mapping between power $p_k$ and spectral efficiency $R_{k}$ for a given $\Omega$. Hence, $U(\mathbf{p},\Omega)$ can also be written as $U(\mathbf{R},\Omega)$. We further use $U(\Omega)$ to denote the maximum energy efficiency under all feasible power vectors.

\subsection{Implementation Considerations for Shared Power Amplifier}
\label{sec:imp_sLNA}

%\begin{figure*}
%    \centerline{
%        \subfigure[ Super-heterodyne implementation ]{ 
%        \includegraphics[width=0.48\textwidth]{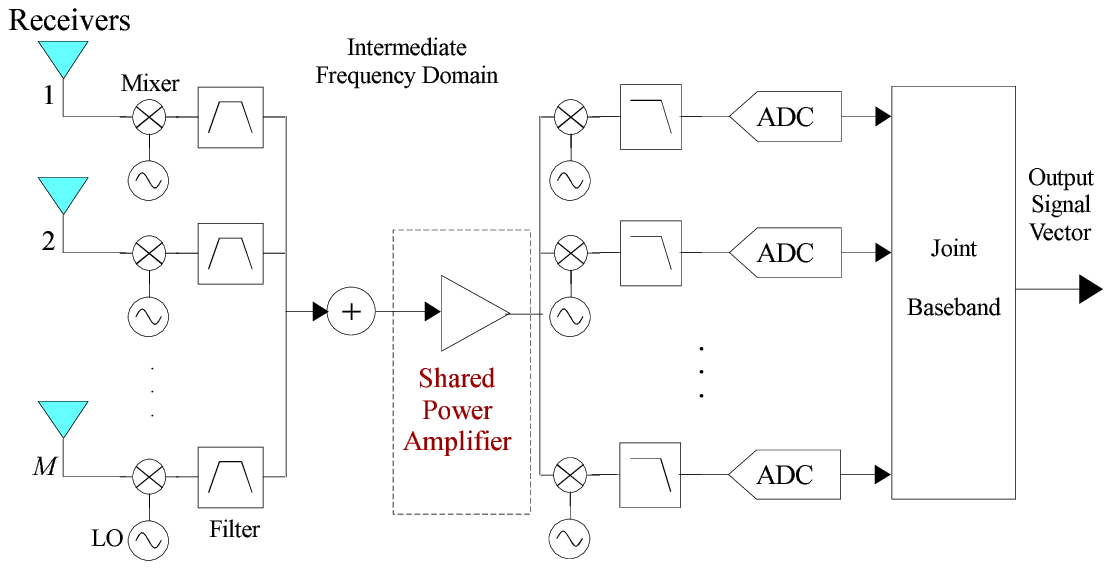}
%        \label{fig:impLNA1}
%        }
%        \hfil
%        \subfigure[ High-speed switch implementation ]{         
%        \includegraphics[width=0.48\textwidth]{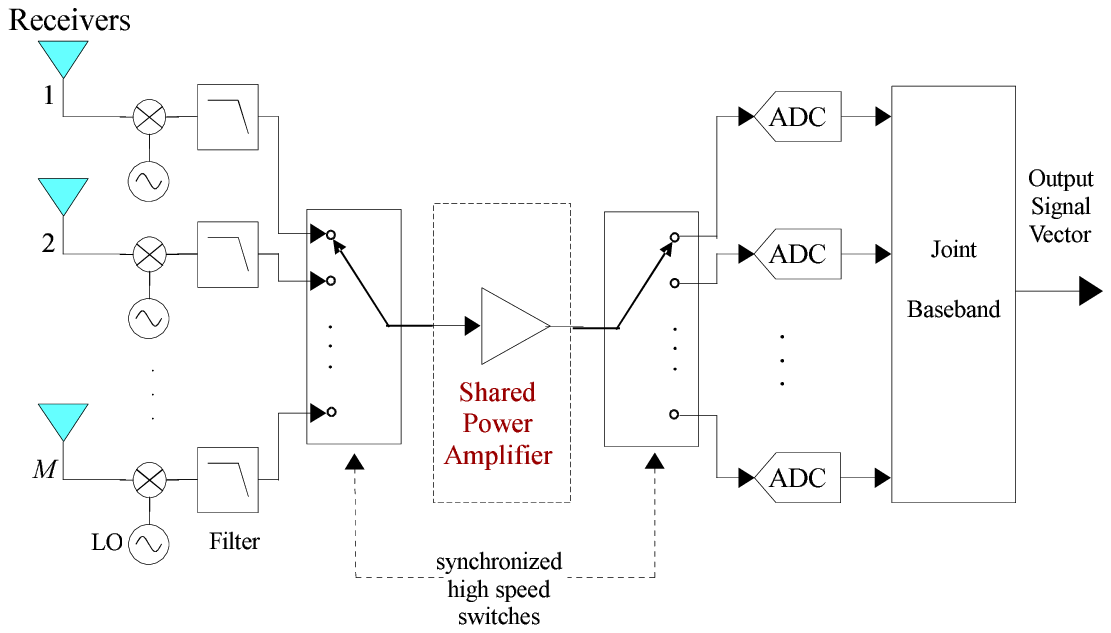}
%        \label{fig:impLNA2}
%        }}
%    \caption{Two possible shared LNA implementation structures.}
%    \label{fig:impLNA}
%\end{figure*}

\begin{figure}


	\centering
	\subfigure[ Super-heterodyne implementation ]{ \includegraphics[width=\linewidth]{1.eps}\label{fig:impLNA1}}
	\subfigure[ High-speed switch implementation ]{ \includegraphics[width=\linewidth]{2.eps}\label{fig:impLNA2}}
	\caption{Two possible shared amplifier implementation structures.}
	\label{fig:impLNA}
\end{figure}

In Fig.~\ref{fig:LNA}, we have illustrated the concept of a shared LNA that uses one power amplifier for all receive RF chains in an uplink MU-MIMO system. Conventionally, shared LNA has been adopted in multi-channel communications \cite{Zhao2017Energy} where signals from different channels are non-overlapping in the frequency domain. It is not straightforward how such a shared LNA structure can be extended to multi-antenna receivers, where signals from different antennas are on the same frequency. In this section, we discuss two possible implementations, shown in Fig.~\ref{fig:impLNA}, that can effectively and efficiently implement shared power amplifier for a multi-antenna receiver.

The first implementation is to leverage the RF framework proposed in \cite{shippee2007shared}, which is built on a super-heterodyne receiver. The details of this implementation is depicted in Fig.~\ref{fig:impLNA1}. In this structure, although different receive antennas are using the same frequency channel, by programming the frond-end mixers and filters, these receive (Rx) paths can be tuned as non-overlapping in the frequency domain at the input of the shared amplifier. In addition, mixers and filters after the amplifier can further isolate the shared-amplifier output from each baseband path.

Another possible implementation is shown in Fig.~\ref{fig:impLNA2}, where a direct-conversion receiver is used and synchronized switches at the input and output of shared amplifier are applied. Assuming sample-and-hold type ADCs and power amplifiers, if the switches at the input and output of shared amplifier have well synchronized timing to ensure the same Rx path, and switching periodicity is aligned with the ADC sampling rate, then the baseband processor can collect the right digital samples from different Rx paths using one  amplifier.

A final comment is that although we focus on LNA in this paper, our work can be directly extended to any power amplifiers in the Rx path of the receiver, such as Intermediate Frequency (IF) amplifier or baseband amplifier.  Besides these two possible implementations, other novel RF structures for shared power amplifiers can be further developed, which is an open topic in the 5G RF research.

\section{Formulating the Energy efficiency Optimization Problem}
\label{sec:problem}
Before presenting the system design problem and looking into its structure, we first introduce the constraints that capture three important engineering limitations in practical systems.

%Before looking into the properties of the objective function, we first introduce the realistic constraints on our optimization problem since the transmit power of each UE is jointly limited by the following three constraints.

\begin{itemize}

\item Each UE's transmit power $p_k$ is subject to a maximum power value, and is obviously non-negative:
\begin{equation}
\label{limitation1}
0\leqslant{}p_{k}\leqslant{}P_{\text{max}},\ k\in\mathcal{K}.
\end{equation}

\item To avoid ADC saturation, each ADC's input power is capped by a maximum value $P_{\text{max}}^{\text{\text{ADC}}}$:
\begin{equation}
\label{limitation2}
\Omega(\mathbf{g}_m\mathbf{P}^2\mathbf{g}_m^H+\sigma_N^2)\leqslant{}P_{\text{max}}^{\text{\text{ADC}}},\ m\in\mathcal{M},
\end{equation}
where $\mathbf{g}_m$ denotes the $m$th row of matrix $\mathbf{G}$. Note that for each $m\in\mathcal{M}$, there exits a combined limitation on all of the transmit power $P_k,\ \forall k\in\mathcal{K}$.

\item Since the effective SNR and SE at the receiver are capped, the limitation on transmit power can be presented as
\begin{equation}
\label{limitation3}
\frac{\Omega{}p_{k}}{(\Omega\sigma_N^2+\sigma_{\text{ADC}}^2)\|\mathbf{f}_k\|^2}\leqslant{}\Gamma_{\text{max}},\ k\in\mathcal{K}.
\end{equation}
Note that \eqref{eqn:se-def} has already shown that, when the SNR at the receiver achieves the maximum value $\Gamma_{\text{max}}$, the spectral efficiency will stay at $\log_2(1+A_d*\Gamma_{\text{max}})$ without further increase. This means that when the received and processed signal of the $k$th UE already achieved the maximal SNR, there is no gain to increase transmit power $p_k$. As a result, the limitation in \eqref{limitation3} is an equivalent interpretation of \eqref{eqn:se-def}.

\end{itemize}

We comment that these constraints will limit the tunable parameters into a bounded subspace. With these practical limitations,  the energy efficiency maximization problem under a shared LNA can be formally presented as
\begin{align}
\label{eqn:orig_opt_prb}
\notag & \maximize_{\mathbf{p},\,\Omega}\ U(\mathbf{p}, \Omega)\\
& \subj \ \eqref{limitation1},\,\eqref{limitation2},\,\eqref{limitation3}.
\end{align}

We first note that there are two optimizable variables $\mathbf{p}$ and $\Omega$. For $\Omega$, since only a finite set of values can be used, we always have
\begin{equation}
\label{eqn:seq}
\maximize_{\mathbf{p},\,\Omega}\ U(\mathbf{p}, \Omega)=\maximize_{\Omega}\ \maximize_{\mathbf{p}}\ U(\mathbf{p}, \Omega),
\end{equation}
which means that we can optimize $\mathbf{p}$ and $\Omega$ sequentially.

The following lemma shows that, when the LNA gain $\Omega$ is fixed, the aforementioned three  constraints  form a convex set with respect to the power vector $\mathbf{p}$.

\begin{lemma}
\label{lemma:constr}
Under a fixed LNA gain $\Omega$, \eqref{limitation1}, \eqref{limitation2} and \eqref{limitation3} are all linear constraints on the power vector $\mathbf{p}$, and therefore form a convex set with respect to $\mathbf{p}$.
\end{lemma}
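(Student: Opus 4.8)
The plan is to examine the three constraints one at a time, show that each becomes an affine inequality in $\mathbf{p}$ once $\Omega$ is held fixed, and then conclude convexity from the fact that a finite intersection of closed half-spaces is convex (in fact a polytope). Constraint \eqref{limitation1} needs no work: it is literally the pair of linear box inequalities $0\leqslant p_k$ and $p_k\leqslant P_{\text{max}}$ for each $k\in\mathcal{K}$.

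The crux is constraint \eqref{limitation2}. Although the transmit amplitude matrix $\mathbf{P}=\text{diag}(\sqrt{p_1},\dots,\sqrt{p_K})$ contains square roots, its square $\mathbf{P}^2=\text{diag}(p_1,\dots,p_K)$ is linear in $\mathbf{p}$, so the seemingly quadratic form expands as $\mathbf{g}_m\mathbf{P}^2\mathbf{g}_m^H=\sum_{k\in\mathcal{K}}|g_{mk}|^2 p_k$. Hence \eqref{limitation2} is equivalent, for each $m\in\mathcal{M}$, to $\sum_{k\in\mathcal{K}}\Omega|g_{mk}|^2 p_k\leqslant P_{\text{max}}^{\text{ADC}}-\Omega\sigma_N^2$, an affine inequality whose coefficients depend only on the channel and on the (fixed) $\Omega$. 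Constraint \eqref{limitation3} is even simpler: with $\Omega$ fixed, the quantities $\sigma_N^2$, $\sigma_{\text{ADC}}^2$, and $\|\mathbf{f}_k\|^2$ (the latter depending only on the channel through $\mathbf{F}=(\mathbf{G}^H\mathbf{G})^{-1}\mathbf{G}^H$, not on $\mathbf{p}$) are all constants, so the left-hand side is linear in $p_k$ and the constraint reads $p_k\leqslant\Gamma_{\text{max}}(\Omega\sigma_N^2+\sigma_{\text{ADC}}^2)\|\mathbf{f}_k\|^2/\Omega$.

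With all three constraints rewritten as affine inequalities, the feasible region is the intersection of finitely many closed half-spaces; since each half-space is convex and convexity is preserved under intersection, the feasible region is convex (indeed a bounded polytope, given \eqref{limitation1}). The only real obstacle here is the first observation of the preceding paragraph — recognizing that $\mathbf{g}_m\mathbf{P}^2\mathbf{g}_m^H$ is in fact linear, not quadratic, in $\mathbf{p}$ because the amplitude matrix enters squared — after which everything is bookkeeping. It is worth noting in passing that all coefficients are well-defined: $\Omega>0$, so dividing through in \eqref{limitation3} is legitimate, and $\mathbf{G}^H\mathbf{G}$ is invertible by the standing assumption $M\geqslant K$ under which the ZF detector $\mathbf{F}$ is defined.
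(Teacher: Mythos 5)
Your proof is correct and follows essentially the same route as the paper: you rewrite constraint \eqref{limitation2} as the linear inequality $\sum_{k}\Omega|g_{mk}|^2p_k+\Omega\sigma_N^2\leqslant P_{\text{max}}^{\text{ADC}}$ (using $\mathbf{P}^2=\mathrm{diag}(p_1,\dots,p_K)$), note that \eqref{limitation1} and \eqref{limitation3} are directly linear in $\mathbf{p}$ for fixed $\Omega$, and conclude via preservation of convexity under intersection. Your version simply spells out the expansion of $\mathbf{g}_m\mathbf{P}^2\mathbf{g}_m^H$ and the well-definedness of the coefficients in more detail than the paper does.
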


\begin{proof}
\eqref{limitation1} and \eqref{limitation3} are straightforward convex sets. For \eqref{limitation2}, it can be re-written as
\begin{equation}
\label{limitation2'}
\Omega\left(\sum_{k=1}^K|g_{mk}|^2p_k+\sigma_N^2\right)\leqslant{}P_{\text{max}}^{\text{ADC}},\ m\in\mathcal{M},
\end{equation}
which is a combined linear limitation on all the components of the power vector, and thus is convex. Finally, the lemma is immediately proven using the property that convexity is preserved under set intersections \cite{boyd2004convex}.

\end{proof}

%%% Properties of the objective function 
We proceed to evaluate the objective function $U(\mathbf{p}, \Omega)$ in \eqref{eqn:ee-def}.  Theorem~\ref{th:quasi-concave} states that $U(\mathbf{p}, \Omega)$ is a strictly quasi-concave function under a fixed LNA gain $\Omega$, and Lemma~\ref{lemma:optimal} establishes the global optimality of local optimum for  strictly quasi-concave functions. Then,  Theorem~\ref{th:omega-concave} further shows that the objective function is concave in $\Omega$. These three results lay the theoretical foundation of the proposed algorithm in Section~\ref{sec:algorithm}.

\begin{theorem}
\label{th:quasi-concave}
The objective function $U(\mathbf{p}, \Omega)$ in Problem~\ref{eqn:orig_opt_prb} is a strictly quasi-concave function with respect to the power allocation vector $\mathbf{p}$. Thus, for a given LNA gain $\Omega$, the original optimization problem~\eqref{eqn:orig_opt_prb}  becomes a constrained quasi-concave optimization problem as follows:
\begin{align}
\label{eqn:qc_opt_prb}
\notag & \maximize_{\mathbf{p}}\ U(\mathbf{p}, \Omega)\\
& \subj \ \eqref{limitation1},\,\eqref{limitation2},\,\eqref{limitation3}.
\end{align}
\end{theorem}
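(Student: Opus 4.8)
The plan is to use the classical superlevel-set argument for ratios of the form (concave)/(positive affine), sharpened to \emph{strict} quasi-concavity via strict concavity of the logarithm. First I would observe that constraint \eqref{limitation3} forces $\Gamma_k \le \Gamma_{\text{max}}$ on every feasible $\mathbf{p}$, so the piecewise rule \eqref{eqn:se-def} reduces there to the single smooth branch $R_k = \log_2(1 + A_d\Gamma_k)$, and for a \emph{fixed} $\Omega$ one has $\Gamma_k = c_k p_k$ with $c_k \triangleq \Omega/\big[(\Omega\sigma_N^2+\sigma_{\text{ADC}}^2)\|\mathbf{f}_k\|^2\big] > 0$ (note $\mathbf{F}$, and hence $\mathbf{f}_k$, does not depend on $\mathbf{p}$). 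Denoting the numerator and denominator of $U(\mathbf{p},\Omega)$ by $N(\mathbf{p}) \triangleq \sum_{k=1}^K \log_2(1+A_d c_k p_k)$ and $D(\mathbf{p}) \triangleq P_c + \tfrac{1}{\eta}\sum_{k=1}^K p_k$, I would then record: $N$ is a sum of univariate strictly concave increasing functions, hence concave, and its restriction to any line segment joining two \emph{distinct} points is \emph{strictly} concave (such a segment varies at least one coordinate $p_k$, in which $t\mapsto\log_2(1+A_d c_k t)$ is strictly concave); and $D$ is affine and strictly positive on the feasible region (indeed $D \ge P_c > 0$).

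Quasi-concavity then follows from the superlevel-set characterization: since $D > 0$, for every $\alpha\in\mathbb{R}$ we have $\{\mathbf{p} : U(\mathbf{p},\Omega)\ge\alpha\} = \{\mathbf{p} : N(\mathbf{p}) - \alpha D(\mathbf{p})\ge 0\}$, and $N - \alpha D$ is concave (sum of the concave $N$ and the affine $-\alpha D$), so every superlevel set is convex. For strictness, fix feasible $\mathbf{p}_1\ne\mathbf{p}_2$, $\theta\in(0,1)$, put $\mathbf{p}_\theta = \theta\mathbf{p}_1 + (1-\theta)\mathbf{p}_2$, and without loss of generality let $\alpha\triangleq U(\mathbf{p}_1,\Omega)\le U(\mathbf{p}_2,\Omega)$; then $\phi\triangleq N - \alpha D$ obeys $\phi(\mathbf{p}_1),\phi(\mathbf{p}_2)\ge 0$ and is strictly concave along $[\mathbf{p}_1,\mathbf{p}_2]$ (strictly concave $N$ plus affine $-\alpha D$), so $\phi(\mathbf{p}_\theta) > \theta\phi(\mathbf{p}_1) + (1-\theta)\phi(\mathbf{p}_2)\ge 0$. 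Using $D(\mathbf{p}_\theta) > 0$ this yields $U(\mathbf{p}_\theta,\Omega) > \alpha = \min\{U(\mathbf{p}_1,\Omega),U(\mathbf{p}_2,\Omega)\}$, which is exactly strict quasi-concavity. The second assertion of the theorem is then immediate: for fixed $\Omega$, Lemma~\ref{lemma:constr} makes the feasible set of \eqref{eqn:orig_opt_prb} convex, so \eqref{eqn:orig_opt_prb} becomes the constrained quasi-concave program \eqref{eqn:qc_opt_prb}.

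The main obstacle I anticipate is the bookkeeping around \emph{strict} concavity along segments: $N$ is only strictly concave coordinatewise, so I must argue that two distinct feasible points differ in at least one coordinate to transfer this to the segment joining them, and I must check that subtracting $\alpha D$ (an affine function) preserves both concavity and strict-concavity-along-the-segment. A secondary point needing an explicit sentence is that working on the feasible set — where \eqref{limitation3} eliminates the flat branch of \eqref{eqn:se-def} — is what makes the objective smooth and the above manipulations legitimate.
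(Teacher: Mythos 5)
Your proposal is correct and follows essentially the same route as the paper's Appendix~A: recognize that for fixed $\Omega$ the numerator $R_{\text{sum}}(\mathbf{p})$ is concave (sum of logarithms of affine functions of $\mathbf{p}$) and the denominator $P_c+P_{\text{sum}}$ is positive and affine, then argue through superlevel sets of the form $\{\mathbf{p}:\,R_{\text{sum}}(\mathbf{p})-\alpha(P_c+P_{\text{sum}})\geqslant 0\}$, whose convexity follows from concavity of that difference; the convexity of the feasible region is then supplied by Lemma~\ref{lemma:constr}, exactly as you note. The one place where you go beyond the paper is the \emph{strictness}: the paper only establishes convexity of the superlevel sets $S_\alpha$ (handling $\alpha\leqslant 0$ and $\alpha>0$ separately), which by itself yields quasi-concavity but not the strict version, whereas you add the explicit segment argument --- two distinct feasible points differ in at least one coordinate, the corresponding univariate $\log_2(1+A_d c_k t)$ term is strictly concave, subtracting the affine $\alpha D$ preserves strict concavity along the segment, and hence $U(\mathbf{p}_\theta,\Omega)>\min\{U(\mathbf{p}_1,\Omega),U(\mathbf{p}_2,\Omega)\}$. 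That refinement is correct and in fact closes a small gap the paper leaves implicit; your preliminary remark that constraint~\eqref{limitation3} confines the objective to the smooth branch of~\eqref{eqn:se-def} is likewise consistent with how the paper treats $R_{\text{sum}}$ in its appendix.
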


\begin{proof}
See Appendix~\ref{apd:1}.
\end{proof}

We note that any strictly concave function is also strictly quasi-concave but the reverse is generally not  true. Moreover, an important property of an optimization problem whose objective function is strictly quasi-concave is that a local optimal solution must be the global optimal one, as formally presented in Lemma~\ref{lemma:optimal}. The proof can be found in  \cite{martos1965direct}.% and is omitted.

%we can have the property that the local optimal solution of a optimization problem whose objective function is strictly quasi-concave is also its global optimal solution. See the following lemma for more details\footnote{A similar proof can be found in \cite{martos1965direct}.}.

\begin{lemma}
\label{lemma:optimal}
Suppose function $f$ is strictly quasi-concave. Then the local maximum of $f$ is also the global maximum.
\end{lemma}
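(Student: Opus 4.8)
The plan is to prove Lemma~\ref{lemma:optimal} by contradiction, exploiting the defining inequality of strict quasi-concavity directly. Recall that $f$ is \emph{strictly} quasi-concave if for all $x \neq y$ in its (convex) domain and all $\lambda \in (0,1)$ we have $f(\lambda x + (1-\lambda) y) > \min\{f(x), f(y)\}$. The target is to show that if $x^\star$ is a local maximum, then $f(x^\star) \geq f(y)$ for every $y$ in the domain.

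First I would suppose, for contradiction, that $x^\star$ is a local maximum but there exists a point $y$ in the domain with $f(y) > f(x^\star)$. Since the domain is convex, the entire segment $\{\lambda x^\star + (1-\lambda) y : \lambda \in [0,1]\}$ lies in the domain. Next I would apply strict quasi-concavity along this segment: for any $\lambda \in (0,1)$, setting $x_\lambda = \lambda x^\star + (1-\lambda) y$, we get $f(x_\lambda) > \min\{f(x^\star), f(y)\} = f(x^\star)$, where the last equality uses our assumption $f(y) > f(x^\star)$. Thus every point strictly between $x^\star$ and $y$ has strictly larger function value than $x^\star$ itself.

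Finally I would push $\lambda \to 1$, so that $x_\lambda \to x^\star$. For $\lambda$ close enough to $1$, the point $x_\lambda$ lies inside any prescribed neighborhood of $x^\star$; yet we have just shown $f(x_\lambda) > f(x^\star)$. This contradicts the assumption that $x^\star$ is a local maximum, which requires $f(x^\star) \geq f(x)$ for all $x$ in some neighborhood of $x^\star$. Hence no such $y$ exists, and $x^\star$ is in fact a global maximum.

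The argument is short and the only subtlety — really the one place to be careful rather than a genuine obstacle — is making sure the domain is convex so that the segment joining $x^\star$ and $y$ stays feasible; this is part of the standing assumption whenever one speaks of quasi-concavity, and in our application the relevant domain is the feasible set of~\eqref{eqn:qc_opt_prb}, which is convex by Lemma~\ref{lemma:constr}. One should also note that strictness is essential: for a merely quasi-concave function the segment inequality is non-strict, so the contradiction step fails and local maxima need not be global (a flat plateau at a non-maximal level being the standard counterexample). Since this result is classical, I would either give the few-line contradiction argument above or simply cite \cite{martos1965direct} as the excerpt already does.
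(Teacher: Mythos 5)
Your proof is correct, and it is the standard argument: assume a point $y$ with $f(y)>f(x^\star)$, use the strict quasi-concavity inequality $f(\lambda x^\star+(1-\lambda)y)>\min\{f(x^\star),f(y)\}=f(x^\star)$ along the segment, and let $\lambda\to 1$ to contradict local maximality. The paper itself does not prove Lemma~\ref{lemma:optimal} at all --- it only cites \cite{martos1965direct} --- so your self-contained contradiction argument is a legitimate substitute rather than a departure from the paper's route, and your remarks about needing convexity of the domain (supplied here by Lemma~\ref{lemma:constr}) and about strictness being essential (a plateau at a non-maximal level defeats the plain quasi-concave version; in fact semistrict quasi-concavity, where the strict inequality is only required when $f(x)\neq f(y)$, already suffices since your contradiction uses $f(y)>f(x^\star)$) are accurate. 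One small point of hygiene: your argument uses the inequality definition of strict quasi-concavity, whereas the paper's Appendix~\ref{apd:1} works with the convex super-level-set characterization (which by itself is ordinary, non-strict quasi-concavity); this mismatch lies in the paper's terminology, not in your proof, but if you splice your argument in you should state explicitly which definition of strictness you are invoking.
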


%\begin{proof}
%We try to prove Lemma \ref{lemma:optimal} via reduction to absurdity. Suppose the local maximum of $f$ is not its global maximum, i.e., there exits at least two local maximums, denoted as $f(\mathbf{x}_\text{opt1})$ and $f(\mathbf{x}_\text{opt2})$, where $\mathbf{x}_\text{opt1}$ and $\mathbf{x}_\text{opt2}$ are the arguments achieving the local maximums, respectively. Without loss of generality, we assume $f(\mathbf{x}_\text{opt1}) < f(\mathbf{x}_\text{opt2})$.
%
%By the definition of strict quasi-concavity, we have:
%\begin{equation}
%\label{eqn:lemma1.1}
%f(\lambda{}\mathbf{x}_\text{opt1}+(1-\lambda{})\mathbf{x}_\text{opt2}) > \min\{f(\mathbf{x}_\text{opt1}),f(\mathbf{x}_\text{opt2})\} = f(\mathbf{x}_\text{opt1}),
%\end{equation}
%for any $0<\lambda<1$. In the meanwhile, the definition of local maximum requires that $f(\mathbf{x}_\text{opt1})$ is a maximum within some neighborhood of $\mathbf{x}_\text{opt1}$, which is contradictory to \eqref{eqn:lemma1.1}. As a result, Lemma \ref{lemma:optimal} is proved.
%\end{proof}

%For ease of exposition, we will not emphasize ``strictly'' when dealing with the quasi-concavity of the objective function for the rest of this paper. 
Note that all the properties we have established so far are for a \emph{fixed} LNA gain $\Omega$. The following theorem establishes the influence of  $\Omega$ on the maximum energy efficiency under all feasible power vectors.

\begin{theorem}
\label{th:omega-concave}
The energy efficiency function $U(\Omega)$ is concave in the LNA gain $\Omega$.
\end{theorem}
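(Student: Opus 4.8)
The plan is to regard $U(\Omega)$ as the value function of the inner maximization over the power vector $\mathbf{p}$ and to show that it inherits concavity from the structure of that problem once $\Omega$ is given the right role.

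First I would clear away the SE cap. On the feasible set, constraint \eqref{limitation3} forces $\Gamma_k\le\Gamma_{\text{max}}$ for every $k$, so the clipped branch of \eqref{eqn:se-def} is never active; throughout the feasible region $R_k=\log_2(1+A_d\Gamma_k)$ with $\Gamma_k=\Omega p_k/[(\Omega\sigma_N^2+\sigma_{\text{ADC}}^2)\|\mathbf{f}_k\|^2]$, and $U(\mathbf{p},\Omega)=(\sum_k\log_2(1+A_d\Gamma_k))/(P_c+\sum_k p_k/\eta)$. Next I would record two structural facts. (i) Substituting $u_k:=\Omega p_k$ turns all three constraints \eqref{limitation1}--\eqref{limitation3} into affine inequalities in $(\mathbf{u},\Omega)$ --- for instance \eqref{limitation2} becomes $\sum_k|g_{mk}|^2 u_k+\Omega\sigma_N^2\le P_{\text{max}}^{\text{ADC}}$ --- so the joint feasible region $\mathcal{Q}$ is a polyhedron. (ii) For a \emph{fixed} $\mathbf{p}$, $U(\mathbf{p},\Omega)$ is concave and non-decreasing in $\Omega$: the denominator does not depend on $\Omega$; $\Gamma_k(\Omega)=(p_k/\|\mathbf{f}_k\|^2)\,\Omega/(\sigma_N^2\Omega+\sigma_{\text{ADC}}^2)$ is a linear-fractional function whose derivative $(p_k/\|\mathbf{f}_k\|^2)\sigma_{\text{ADC}}^2/(\sigma_N^2\Omega+\sigma_{\text{ADC}}^2)^2$ is positive and decreasing, hence $\Gamma_k(\cdot)$ is concave and increasing; and $R_k(\cdot)$ is concave and non-decreasing, so each $R_k(\Gamma_k(\Omega))$, and therefore $\sum_k R_k$, is concave and non-decreasing in $\Omega$.

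I would then try to pass from (ii) to the value function. Concavity of $U(\Omega)=\max_{(\mathbf{u},\Omega)\in\mathcal{Q}}U(\mathbf{u}/\Omega,\Omega)$ would follow if the objective were jointly concave in $(\mathbf{u},\Omega)$, since partial maximization of a jointly concave function over a convex set is concave; equivalently, given maximizers $(\mathbf{u}_1^*,\Omega_1)$ and $(\mathbf{u}_2^*,\Omega_2)$, the point $(\lambda\mathbf{u}_1^*+(1-\lambda)\mathbf{u}_2^*,\ \lambda\Omega_1+(1-\lambda)\Omega_2)\in\mathcal{Q}$ is feasible at $\Omega_\lambda:=\lambda\Omega_1+(1-\lambda)\Omega_2$, and one would certify that the EE value there dominates $\lambda U(\Omega_1)+(1-\lambda)U(\Omega_2)$.

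The main obstacle is precisely this certification: unlike the fixed-$\mathbf{p}$ slice of (ii), the EE objective is \emph{not} jointly concave in $(\mathbf{u},\Omega)$, because $R_k=\log_2[(\Omega\sigma_N^2+\sigma_{\text{ADC}}^2)\|\mathbf{f}_k\|^2+A_d u_k]-\log_2(\Omega\sigma_N^2+\sigma_{\text{ADC}}^2)$ carries a convex $-\log(\cdot)$ ``effective-noise'' term in $\Omega$, and the rescaled denominator $P_c+\sum_k u_k/(\eta\Omega)$ is not convex either. I would attack this along one of two routes. (a) Dinkelbach reformulation: $U(\Omega)$ is the unique root in $\gamma$ of $\max_{\mathbf{p}\ \mathrm{feasible}}\{\sum_k R_k(\mathbf{p},\Omega)-\gamma(P_c+\sum_k p_k/\eta)\}=0$; show this parametric value is concave in $\Omega$ for each fixed $\gamma\ge0$ --- the concave $+\log(\mathrm{signal}+\mathrm{noise})$ part should dominate the convex $-\log(\mathrm{noise})$ part once the ADC and clipping constraints are imposed on the optimizer --- and then deduce concavity of the root from monotonicity of the parametric value in $\gamma$. (b) Direct sensitivity analysis: because $\mathcal{Q}$ is polyhedral the active set among \eqref{limitation1}--\eqref{limitation3} is locally constant in $\Omega$, so on each active-set cell one can eliminate the pinned power components, reduce $U(\Omega)$ to an explicit scalar function, verify $U''(\Omega)\le0$ there, and glue across cells using $C^1$-continuity of $U$, which holds by the envelope theorem since $\partial_{\mathbf{p}}U$ vanishes on the free components at the optimizer.
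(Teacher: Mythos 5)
Your first block is sound and matches the substance of the paper's argument: for fixed $\mathbf{p}$, each $\Gamma_k(\Omega)=\frac{p_k}{\|\mathbf{f}_k\|^2}\cdot\frac{\Omega}{\sigma_N^2\Omega+\sigma_{\text{ADC}}^2}$ is increasing and concave, and composing with the increasing concave $\log_2(1+A_d\,\cdot\,)$ gives concavity of $U(\mathbf{p},\Omega)$ in $\Omega$; the paper reaches the same conclusion by directly computing $\frac{\partial U}{\partial\Omega}=\sum_k \mathsf{C}/\mathsf{P}(\Omega)$ with $\mathsf{P}$ a quadratic with positive coefficients and checking $\frac{\partial^2 U}{\partial\Omega^2}<0$. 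The handling of the SE cap via \eqref{limitation3} is also fine. The genuine gap is everything after that: the passage from fixed-$\mathbf{p}$ concavity to concavity of the value function $U(\Omega)=\max_{\mathbf{p}\ \text{feasible}}U(\mathbf{p},\Omega)$ is never actually carried out. You correctly observe that after the substitution $u_k=\Omega p_k$ the objective is not jointly concave, so the standard partial-maximization lemma does not apply, and you then offer two programs, (a) Dinkelbach and (b) active-set sensitivity, neither of which is executed. In (a) the decisive step is stated as ``the concave part should dominate the convex part once the constraints are imposed,'' which is precisely the inequality that would constitute the proof and is left unverified; in (b) the cell-wise verification $U''(\Omega)\le 0$ is likewise asserted rather than shown (the gluing itself would be fine: a $C^1$ function concave on each cell of a finite partition of an interval is concave). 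So as written the proposal identifies the crux but does not prove the theorem.

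For context, the paper closes this same step in one line, by asserting that ``the maximum of a set of concave functions is still concave.'' Your suspicion of this step is well founded: as a general principle that statement is false (it is the pointwise \emph{minimum} of concave functions that preserves concavity, and here the feasible set additionally varies with $\Omega$ through \eqref{limitation2} and \eqref{limitation3}), so the paper's own justification is at best incomplete and would need an argument of exactly the kind you sketch --- e.g., exhibiting joint concavity in a suitable reparametrization, or a Dinkelbach/sensitivity analysis. In short: your fixed-$\mathbf{p}$ analysis duplicates the paper's first half by a cleaner compositional route, and you have put your finger on the weak point of the published proof, but your proposal leaves that point open rather than closing it, so it does not yet establish the statement.
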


\begin{proof}
See Appendix~\ref{apd:2}.
\end{proof}

\section{The B-GAIP Algorithm for Energy Efficiency Optimization}% Problem~\eqref{eqn:orig_opt_prb} % for Small and Large System Dimensions}
\label{sec:algorithm}
In this section, we propose the Bisection -- Gradient Assisted Interior Point (B-GAIP) algorithm that solves Problem~\eqref{eqn:orig_opt_prb} under both small and large system dimensions. This algorithm is essentially a two-step implementation of \eqref{eqn:seq} as follows. First, we fix the LNA gain $\Omega$ and design a gradient assisted interior-point (GAIP) algorithm to optimize the power vector, leveraging the strict quasi-concavity property established in Theorem~\ref{th:quasi-concave}. On top of GAIP, we use a bisection search method to find the optimal LNA gain for the maximum energy efficiency, based on its concavity in $\Omega$ as shown in Theorem~\ref{th:omega-concave}. In addition to the detailed description of the proposed algorithms, we  present the proof of convergence and analyze  the algorithm complexity.

%In this section, we describe and evaluate a two-step algorithm applicable for both small and large system dimensions through three subsections. Firstly, we fix LNA gain $\Omega$ as a constant value and use a gradient assisted interior point method to optimize the power vector. Secondly, we use a bisection search method to find the optimal LNA gain for the maximum energy efficiency. Finally, we combine these two methods into a two-step algorithm, and through this way, we can find both the LNA gain $\Omega$ and the power allocation $P_k$ that yield the global optimization on energy efficiency. In addition, some evaluations such as the proof of convergence and the time complexity analysis are carried out in the last subsection.

%%% to be used
% The aforementioned two theorems guarantee a two-step algorithm which solves the global optimization problem by optimizing the power allocation vector $\mathbf{p}$ and the LNA gain $\Omega$ sequentially. The following section describes and evaluates the two-step algorithm detailedly.

%%%%%%%%%%%%%%%%%%%%%%%%%% two-step Algorithm: Step-1 %%%%%%%%%%%%%%%%%%%%%%%
\subsection{GAIP: Optimizing Power Allocation under A Fixed LNA Gain}
\label{subsec:alg1}

The heuristic gradient or gradient-based optimization methods are commonly used in energy efficient power allocation problems \cite{Zhao2017Energy,Miao2010Energy,Zhao2013Energy}. This method is well-known and widely-used due to its effectiveness and succinctness. However, under the problem setting of this paper, the optimization objective is a strictly quasi-concave function with convex constraints. Note that \eqref{limitation1} and \eqref{limitation3} are limitations on a single UE transmit power and the total number of these constraints is $2K$, while \eqref{limitation2} is a combination of all the UE transmit powers and the total number of the limitations is $M$. As the system dimension becomes large, so does the number of constraints on the power allocation vector and LNA gain. Therefore, a straightforward adoption of the gradient descent 
algorithm \cite{boyd2004convex} may have very slow convergence, or not converge at all within a reasonable time period of solving the problem. 

To cope with this challenge, especially to make the algorithm efficient and applicable for large system dimensions, we resort to the interior-point method \cite{potra2000interior,boyd2004convex}. The interior-point method is an optimization algorithm which transfers constrained optimization problems into unconstrained ones. The main idea is to construct a penalty function which ``punishes'' the objective function when it approaches or falls out of the boundary of the feasible set. In particular, we chose a logarithmic penalty function in our problem, which is concave. Since we consider a fixed LNA gain $\Omega$ in this subsection, we simply write $U(\mathbf{p},G)$ as $U(\mathbf{p})$ for convenience. The penalty function can then be written as
\begin{align}
\label{eqn:penalty function}
\notag\varphi(\mathbf{p},\xi)&=U(\mathbf{p})+\xi{}B(\mathbf{p})\\
\notag=&U(\mathbf{p})+\xi\sum_{k=1}^K\Big[\ln{}p_k+\ln(P_{\text{max}}-p_k)\\
\notag&+\ln\left(\Gamma_{\text{max}}-\frac{\Omega{}p_{k}}{(\Omega\sigma_N^2+\sigma_{\text{ADC}}^2)\|\mathbf{f}_k\|^2}\right)\Big]\\
&+\xi\sum_{m=1}^M\ln\left[P_{\text{max}}^{\text{ADC}}-\Omega\left(\sum_{k=1}^K|g_{mk}|^2p_k+\sigma_N^2\right)\right].
\end{align}

Note that $B(\mathbf{p})$ represents the penalty for approaching the boundaries, while $\xi$ is the penalty factor which decides the  intensity of penalty. We can see  from \eqref{eqn:penalty function} that when  $\mathbf{p}$ is about to violate the constraints in \eqref{limitation1}, \eqref{limitation2} and \eqref{limitation3}, the latter two terms of \eqref{eqn:penalty function} will reduce the original objective function $U(\mathbf{p})$ by a value that is inversely proportional to the distance between $\mathbf{p}$ and the boundary of the feasible set. Intuitively, as the penalty factor $\xi$ approaches to zero, the penalty function $\varphi(\mathbf{p},\xi)$ is approaching to $U(\mathbf{p})$ as well.

\begin{algorithm}
\label{alg:GAIP}
\SetAlgoNoLine
\caption{Gradient Assisted Interior Point Method}
\KwData{initial penalty factor $\xi^{(0)}$; coefficient $c$; error limit $\epsilon$; maximum loop count $L_\text{max}$;  step size $t_l$}
\KwIn{$\Omega$, channel coefficients}
\KwOut{$\mathbf{p}_\text{opt}$ and $U_{\text{opt}}=U(\mathbf{p}_\text{opt})$}
Randomly choose the initial power vector $\mathbf{p}^{(0)}$ from the feasible set\;
Set initial penalty function value  $\varphi(\mathbf{p}^{(0)},\xi^{(0)})$ using \eqref{eqn:penalty function}\;
Set iteration index $i=0$\;
\SetKwRepeat{doWhile}{do}{while}
\doWhile{$\left|\frac{\varphi(\mathbf{p}^{(i)},\xi^{(i)})-\varphi(\mathbf{p}^{(i-1)},\xi^{(i-1)})}{\varphi(\mathbf{p}^{(i-1)},\xi^{(i-1)})}\right| > \epsilon$}
{$\mathbf{p}_{\text{curr}}=\mathbf{p}^{(i)}$; $\varphi_{\text{opt}}=\varphi(\mathbf{p}^{(i)},\xi^{(i)})$\;
	\For{$l=1$ \KwTo $L_\text{max}$}{
    	Calculate $\mathbf{g}_l=\nabla{}\varphi(\mathbf{p}_{\text{curr}},\xi^{(i)})/|| \nabla{}\varphi(\mathbf{p}_{\text{curr}},\xi^{(i)}) ||$\;
       Update power vector as $\mathbf{p}_{\text{next}}=\mathbf{p}_{\text{curr}}+t_l\mathbf{g}_l$\;        
        \If{$\varphi(\mathbf{p}_\mathrm{next},\xi^{(i)})>\varphi_{\mathrm{opt}}$}
            { 
                Set $\mathbf{p}_\text{curr}=\mathbf{p}_\text{next}$\;
                Set $\varphi_{\text{opt}}=\varphi(\mathbf{p}_\text{next},\xi^{(i)})$\;
            }
    }
    $i$++\; $\xi^{(i)}=\xi^{(i-1)}*c$\;
    $\mathbf{p}^{(i)}=\mathbf{p}_\text{opt}$; $\varphi(\mathbf{p}^{(i)},\xi^{(i)})=\varphi_{\text{opt}}$;}
%Return $\mathbf{p}_\text{opt}$ and $U_{\text{opt}}=U(\mathbf{p}_\text{opt})$\;
\end{algorithm}

We then resort to the gradient descent method to find the optimal value of the unconstrained optimization function in \eqref{eqn:penalty function}. In particular, the partial derivative of the penalty function with respect to $p_k$ can be derived as
\begin{align}
\label{eqn:partial derivative}
\notag\frac{\partial{}\varphi(\mathbf{p},\xi)}{\partial{}p_k}&=\frac{\partial{}U(\mathbf{p})}{\partial{}p_k}+\xi\left(\frac{1}{p_k}-\frac{1}{P_{\text{max}}-p_k}-\frac{T_k}{\Gamma_{\text{max}}-T_k{}p_k}\right)\\
&+\xi\sum_{m=1}^M\frac{-\Omega{}|g_{mk}|^2}{P_{\text{max}}^{\text{ADC}}-\Omega\left(\sum_{k=1}^K|g_{mk}|^2p_k+\sigma_N^2\right)},
\end{align}
where we define $T_k=\frac{\Omega{}}{(\Omega\sigma_N^2+\sigma_{\text{ADC}}^2)\|\mathbf{f}_k\|^2}$, and the first term $\frac{\partial{}U(\mathbf{p})}{\partial{}p_k}$ in \eqref{eqn:partial derivative} is given by

\begin{equation}
\frac{\partial{}U(\mathbf{p})}{\partial{}p_k}=\frac{A_d{}T_k}{\ln2(1+A_d\Gamma_k)(P_c+P_{\text{sum}})}-\frac{R_{\text{sum}}}{\eta(P_c+P_{\text{sum}})^2}.
\end{equation}

We further define the gradient metric over power vector $\mathbf{p}$ as $\nabla{}\varphi(\mathbf{p},\xi)=[\frac{\partial{}\varphi}{\partial{}p_1},\ldots,\frac{\partial{}\varphi}{\partial{}p_K}]$. Finally, the proposed GAIP algorithm is compactly presented in Algorithm \ref{alg:GAIP}.

\subsection{B-GAIP: Optimizing Both LNA Gain and Power Allocation}
\label{subsec:alg2}

The GAIP algorithm presented in \ref{subsec:alg1} only optimizes the power values under a fixed LNA gain. A naive approach would be to apply Algorithm~\ref{alg:GAIP} to \emph{all} possible values of $\Omega$, i.e., sweeping all integer values between $\Omega_\text{min}^\text{dB}$ and $\Omega_\text{max}^\text{dB}$, and obtain the optimal energy efficiency. However, this approach may have high complexity if the set of feasible $\Omega$ is large, and it does not utilize the concavity property of the objective function with respect to $\Omega$. 

Alternatively, we propose to solve this optimization problem using a bisection search method, which has lower complexity than linear sweeping, achieves the same optimal value, and leverages the concavity to guarantee optimality (see Theorem \ref{th:omega-concave}). The overall algorithm that solves Problem~\eqref{eqn:orig_opt_prb} is presented in Algorithm \ref{alg:Bisec}.

\begin{algorithm}
\label{alg:Bisec}
\SetAlgoNoLine
\caption{The B-GAIP Algorithm for Solving Problem~\eqref{eqn:orig_opt_prb}}
\KwIn{$\Omega_\text{min}^\text{dB}$, $\Omega_\text{max}^\text{dB}$ and channel coefficients}
\KwOut{optimal power allocation vector $\mathbf{p}_\text{opt}$; optimal LNA gain $\Omega_\text{opt}^\text{dB}$; global maximum energy efficiency $U_\text{max}$}
Set $\Omega_\text{left}^\text{dB}=\Omega_\text{min}^\text{dB}$ and $\Omega_\text{right}^\text{dB}=\Omega_\text{max}^\text{dB}$\;
\While{$\Omega_\mathrm{left}^\mathrm{dB}\neq\Omega_\mathrm{right}^\mathrm{dB}$} 
	{ 
		$\text{LB}=\lfloor(\Omega_\text{left}^\text{dB}+\Omega_\text{right}^\text{dB})/2\rfloor$\;
        		$\text{UB}=\lceil(\Omega_\text{left}^\text{dB}+\Omega_\text{right}^\text{dB})/2\rceil$\;
             
        \If{$\text{LB} == \text{UB}$} 
            { 
                Set $\text{UB}=\text{UB}+1$\; 
            }
            
            $(\mathbf{p}_\text{opt1}, U_{\text{opt1}}) \leftarrow$  \textbf{Algorithm 1} with input $\Omega^{\text{dB}}=\text{LB}$ and channel coefficients\;
            $(\mathbf{p}_\text{opt2}, U_{\text{opt2}}) \leftarrow$  \textbf{Algorithm 1} with input $\Omega^{\text{dB}}=\text{UB}$ and channel coefficients\;

        \eIf{$U_{\mathrm{opt1}}>U_{\mathrm{opt2}}$}
            { 
                Set $\Omega_\text{right}^\text{dB}=\text{LB}$\;
            }
            {
                Set $\Omega_\text{left}^\text{dB}=\text{UB}$\;
            }
        
        \If{$\Omega_\text{left}^\text{dB}==\Omega_\text{right}^\text{dB}$}
        	{
            	Set $\Omega_\text{opt}^\text{dB}=\Omega_\text{left}^\text{dB}$\;
            	Set $U_\text{max}=\max\{U_{\text{opt1}},U_{\text{opt2}}\}$\;
                	Choose $\mathbf{p}_\text{opt}$ according to $U_\text{max}$\;
        	}
    }
 %      Return $\mathbf{p}_\text{opt}$, $\Omega_\text{opt}^\text{dB}$, and $U_\text{max}$\;
\end{algorithm}

%%%%%%%%%%%%%%%%%%%%%%%%%% two-step Algorithm: Evaluations %%%%%%%%%%%%%%%%%%%%%%%

\subsection{Analysis of the B-GAIP Algorithm}
\label{subsec:Evaluations}
In this section, we will theoretically evaluate the convergence and complexity of the proposed algorithm. In order to do so, we first reorganize the B-GAIP algorithm into an equivalent formulation in Sec.~\ref{sec: ana1}. This new interpretation, although does not bear algorithmic novelty, simplifies the convergence and complexity study in Sec.~\ref{sec: ana2} and \ref{sec: ana3}.

%Section \ref{subsec:alg1} and \ref{subsec:alg2} introduce the gradient assisted interior point method and the bisection search method respectively. In this subsection, we first combine these two methods together as a two-step algorithm and after that, we will evaluate the convergence and time complexity of our proposed algorithm theoretically.

\subsubsection{An Equivalent Interpretation of B-GAIP}
\label{sec: ana1}
There are three obstacles to solve the original optimization problem \eqref{eqn:qc_opt_prb}: (1) under a fixed LNA gain, the objective function is  strictly quasi-convex, which is difficult to handle compared to the standard convex function; (2) for massive MIMO, the system dimension is large and thus the algorithm complexity is of great importance to its practical utility; and (3) the engineering constraints must be satisfied. Therefore, the methods in existing literature \cite{Zhao2017Energy,Miao2010Energy,chen2012energy} are not applicable in our problem setting. 

In order to cope with these obstacles, we have proposed the B-GAIP algorithm in this paper. For the convenience of the theoretical evaluations, we re-interpret Algorithm~\ref{alg:Bisec} and provide some important comments in the following.

%As we mentioned above, we optimize the power allocation vector and the LNA gain sequentially. 
In particular, the procedure of B-GAIP can be reorganized as follows.

\begin{itemize}
\item Step-1: Choose two LNA gain values, $\text{LB}$ and $\text{UB}$, as described in Algorithm \ref{alg:Bisec}.
\item Step-2: Under these two gains, use Algorithm \ref{alg:GAIP} to find the maximal energy efficiencies, $U_\text{opt1}$ and $U_\text{opt2}$, respectively.
\begin{itemize}
\item Step-2.1: In Algorithm \ref{alg:GAIP}, choose the penalty factor and convert the problem into an unconstrained quasi-concave optimization problem.
\item Step-2.2: Solve the converted problem using the gradient descent method.
\item Step-2.3: Go back to Step-2.1 with a smaller penalty factor if the accuracy requirement is not met; otherwise return the maximal energy efficiency and the optimal power vector.
\end{itemize}
\item Step-3: Compare the two outputs of Step-2 and then decide if the algorithm converges. Go back to Step-1 if not converge; otherwise return the optimal power allocation vector and LNA gain.
\end{itemize}

From the procedure above, we can conclude that the two-step algorithm has three main layers as follows.
\begin{itemize}
\item The outer layer: Optimize the LNA gain via the bisection search method;
\item The middle layer: Under a given LNA gain, transfer the constrained problem into an unconstrained optimization problem via the interior-point method;
\item The inner layer: Find the optimal value of the unconstrained problem via the gradient descent method.
\end{itemize}
%The deep and clear understanding of the algorithm procedure and the three layers above are the prerequisites to develop the evaluations below.
We will use this interpretation in the following convergence and complexity analysis.

\subsubsection{The Convergence Analysis}
\label{sec: ana2}
It is difficult to directly analyze the convergence of B-GAIP due to its inherent complexity. We thus leverage the equivalent interpretation and separately study the convergence of each layer, and the overall convergence of B-GAIP can be proved by a combination argument. In the following,  the convergences of the inner, middle and outer layers will be presented respectively.

The gradient descent method is used in the inner layer, with $L_{max}$ loops be performed while at each loop $l$, the argument $\mathbf{p}$ move towards the gradient direction which is also the ascent direction of function $\varphi(\mathbf{p})$ with a step length that diminishes in $l$. Note that only the direction of the gradient is used due to normalization in our algorithm. As a result, the step length that is moved at each loop, i.e., $||\mathbf{p}_\text{next}-\mathbf{p}_\text{curr}||$, is exactly the step size $t_l$, where we set $t_l=0.01/l$. When $L_{max} \to \infty$, we have $t_l \to 0$ while $\sum_{l=1}^{\infty}t_l=\infty$, which suggests that the argument can move an unbounded distance towards the optimal value if sufficient iterations are allowed, thus guaranteeing the convergence. Recall that Lemma~\ref{lemma:optimal} proves that the local maximum of our objective function is also the global maximum, and thus we conclude that the inner layer not only converges, but also converges to the global optimum.

For the middle layer, we use the interior-point method to transfer the original constrained optimization problem into an unconstrained one, and then invoke the gradient descent method (the inner layer) to solve it. Since we have already proved the convergence of the inner layer, it is now sufficient to establish the convergence of the middle layer if we can prove the equivalence of the original constrained problem and the converted unconstrained problem. We start the argument by noting from \eqref{eqn:penalty function} that the difference between the original and the converted functions is the boundary penalty $B(\mathbf{p})$, which is controlled by the penalty factor $\xi$. At the end of each loop, the penalty factor $\xi$ is multiplied by a decreasing coefficient $c$, which eventually results in $\xi \to 0$, and the converted function $\varphi(\mathbf{p},\xi)$ then converges to the original objective function $U(\mathbf{p})$. However, as the penalty factor $\xi \to 0$, the difficulty for the inner layer to obtain the optimal value increases as it may approach the boundary of the feasible set. This issue can be resolved by using the output of the previous iteration as the starting point for the new iteration \cite{boyd2004convex}, as is done in Algorithm \ref{alg:GAIP}. As a result, we can achieve the global optimal value $\mathbf{p}_\text{opt}$ for maximizing $U(\mathbf{p})$ by combining the inner and middle layers. %Note that we can also stop the iterations in advance when it achieves the required accuracy defined by error limit $\epsilon$, to save the operation time while have a near-optimal solution.

Finally, in the outer layer, the bisection search method is used to optimize the LNA gain. We have already proved that the optimal power allocation vector under \emph{any} fixed LNA gain is achieved. Because of the concavity property (Theorem~\ref{th:omega-concave}) and the optimality of the bisection search method \cite{boyd2004convex}, we conclude that the algorithm must converge to the optimum. In fact, at most $\log_2\left(\Omega_\text{max}^\text{dB}-\Omega_\text{min}^\text{dB}\right)$ iterations will be performed before we reach the optimal value.

%Combine the three aspects above, we can conclude that our proposed two-step algorithm will converge to the global optimal power allocation vector $\mathbf{p}_\text{opt}$ and the LNA gain $\Omega_\text{opt}$ which yield the maximal energy efficiency $U_\text{max}$.

\subsubsection{The Complexity Analysis}
\label{sec: ana3}
In order to quantitatively study the complexity of Algorithm \ref{alg:GAIP} and \ref{alg:Bisec}, we individually analyze the complexity of each layer like we did in the convergence analysis. For the inner layer, we perform $L_{max}$ iterations and within each iteration, the partial derivative of each UE is calculated separately, resulting in a complexity scaling $\mathcal{O}(KL_{max})$. For the middle layer, the number of iterations will change according to the required accuracy, and therefore, it is a function of the error limit $\epsilon$. We denote the number of iteration times as $T_\epsilon$ and the complexity scaling of this layer should be $\mathcal{O}(T_\epsilon)$. Finally, for the outer layer, the complexity scaling of the bisection search is $\mathcal{O}\left(\log_2\left(\Omega_\text{max}^\text{dB}-\Omega_\text{min}^\text{dB}\right)\right)$. Putting all three layers together, the overall complexity of B-GAIP is of the order: 
\begin{equation}
\label{eqn:complexity}
\mathcal{O}\left(KL_{max}T_\epsilon\log_2\left(\Omega_\text{max}^\text{dB}-\Omega_\text{min}^\text{dB}\right)\right).
\end{equation}

Qualitatively, as discussed before, in the scenario with large number of BS antennas and UEs, it becomes time-consuming to determine whether the boundary limitations are violated. Fortunately, this difficulty is circumvented in Algorithm \ref{alg:GAIP} as it converts the engineering constraints into penalty items and therefore transfers a constrained optimization problem to an unconstrained one, greatly reducing the complexity, especially when the system dimension is large. In the meanwhile, Algorithm \ref{alg:Bisec} utilizes a bisection approach which reduces the search time exponentially compared with the intuitive linear search method. Accordingly, Algorithm \ref{alg:GAIP} and \ref{alg:Bisec} are more efficient than heuristic solutions and applicable for large scale systems.

%%%%%%%%%%%%%%%%%%%%%%%%%%%%%%%%%%%%%%%%%% Simulation Results %%%%%%%%%%%%%%%%%%%%%%%%%%%%%%%%%%%%%%%%%
\section{Simulation Results}
\label{sec:simulation}
We resort to system-level simulations of an uplink MIMO system to numerically evaluate the proposed B-GAIP algorithm. Important simulation parameters can be found in Table \ref{table:parameters}. In particular, for a given MIMO configuration, many realizations of the small-scale fading vector $\mathbf{h}_k$, the $K$ UE positions, and the $M$ BS antenna positions (in the case of distributed MIMO) are randomly generated. 

\begin{table}[!htb]
\begin{center}
\caption{Simulation Parameters}
\label{table:parameters}
\begin{tabular}{| c | c |}
\hline
Cell radius $R_0$ & 100 $\sim$ 1000 m \\
\hline
Background noise $\sigma_N^2$ & -104 dBm \\
\hline
\multirow{2}{*}{Shadow fading $V$} & Log-normal with \\ &standard deviation of 8 dB \\
\hline
ADC noise $\sigma_{\text{ADC}}^2$ & -60 dBm \\
\hline
Minimal LNA gain $\Omega_\text{min}^\text{dB}$ & 1 dB \\
\hline
Maximal LNA gain $\Omega_\text{max}^\text{dB}$ & 70 dB \\
\hline
Diversity gain $A_d$ & 1 \\
\hline
Circuit power $P_c$ & 0.1 W \\
\hline
Power amplifier efficiency $\eta$ & 50\% \\
\hline
Maximal transmit power $P_\text{max}$ & 20 dBm \\
\hline
Maximal ADC input power $P_\text{max}^\text{ADC}$ & -20 dBm \\
\hline
Maximal SNR $\Gamma_\text{max}$ & 35 dB \\
\hline
Step size $t_l$ & $0.01/l$ \\
\hline
\end{tabular}
\end{center}
\end{table}

In the following, we first compare the proposed algorithm with the heuristic brute force search solution, to verify the feasibility and accuracy of the B-GAIP algorithm in Sec. \ref{subsec:algorithm}.
%Then in Sec. \ref{subsec:concavity}, we verify the quasi-concavity of $U(\mathbf{p},\Omega)$ when $\Omega$ is fixed, and then numerically investigate the property of $U(\Omega^\text{dB})$.
Aim at analyzing the energy efficiency performance in different system settings, in Sec. \ref{subsec:dimension_layout}, we compare the average maximum energy efficiency with centralized and distributed MIMO layouts in both small and large system dimensions. In particular, we evaluate the energy efficiency with different UE number $K$, BS antenna number $M$, and the ratio $v=M/K$ with different system layouts.
After that, we adjust the cell radius from 100 meters to 1000 meters with a 100m step size to evaluate the effect of cell radius on energy efficiency in Sec. \ref{subsec:radius}. Note that the comparisons are made with both centralized and distributed MIMO layouts.
In addition, we compare  B-GAIP  with heuristic algorithms to evaluate the performance gain of the proposed algorithm in Sec. \ref{subsec:heuristic}.
Finally, comparison between shared and separate LNA is carried out in Sec. \ref{subsec:shar_sepa} to illustrate the trade-off between energy efficiency  and  system costs. These numerical simulation result may offer guidance to the design and operation of cost-aware massive MIMO systems.

%%%%%%%%%%%%% Sub: Comparison between the Proposed Algorithm and the Brute Force Solution %%%%%%%%%%%%%%%%%
\subsection{Comparison between B-GAIP and Brute Force Search}
\label{subsec:algorithm}
It is crucial to verify whether the proposed B-GAIP algorithm will converge to the global optimal solution of the original optimization problem. In addition to the theoretical analysis in Section \ref{subsec:Evaluations},  we now compare our algorithm with the naive brute force solution that tries every possible parameter combination to find out the optimal solution. In particular, we choose a small system dimension with 2 UEs and 4 BS antennas in a distributed layout due to the high complexity of brute force search. We try all possible transmit power values and the LNA gain in dB domain with 0.1dB and 1dB step-size, respectively. We change the cell radius from 100m to 1000m and at each radius, we perform 2000 realizations of the channel parameters including UE and BS antenna positions, fast fading and shadow fading. In each realization, we run the proposed B-GAIP algorithm and the brute force solution separately, under the same computational environment, to find out the maximum energy efficiency. We also record the run time in each realization.

\begin{figure}
\centering
\includegraphics[width=0.48\textwidth]{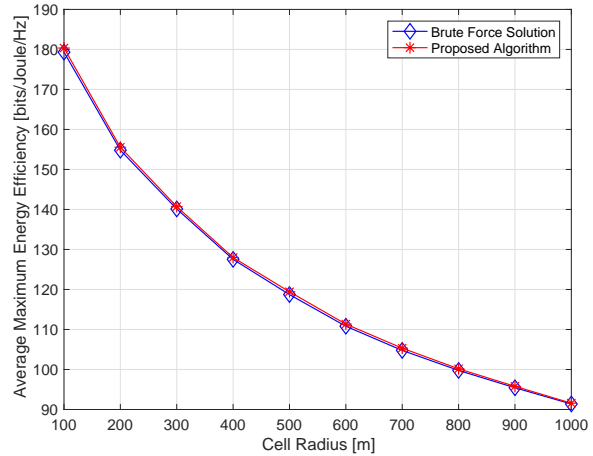}
\caption{Energy efficiency comparison between the B-GAIP algorithm and the brute force  solution.}
\label{fig:compare}
\end{figure}

Fig. \ref{fig:compare} illustrates the average maximum energy efficiency obtained by the B-GAIP algorithm and the brute force search under different cell radii. We can clearly see from the figure that our B-GAIP algorithm has the same energy efficiency as the brute force solution and therefore converges to the optimal value\footnote{Note that while looking the same in the figure, the average maximum energy efficiency in B-GAIP actually slightly exceeds that of the brute force search, which is the result of the 0.1dB step-size since it may skip over the true optimal value.}. 
%% to be added - complexity simulation result
%In the meanwhile, our algorithm greatly reduces the time complexity as analyzed in Subsection \ref{subsec:Evaluations} and therefore, it becomes possible for us to evaluate the energy efficiency performance of large scale systems with both centralized and distributed MIMO layouts, which will be discussed detailedly in Subsection \ref{subsec:dimension_layout}.

Moreover, we compare the complexity between the proposed algorithm and other heuristic brute force search solutions, by comparing their running time.  We compare B-GAIP with three algorithms: (1) naive brute force solution, which simultaneously searches  power vector $\mathbf{p}$ and LNA gain $\Omega$ for the optimal values; (2) hybrid solution type 1 (Hybrid1), which uses brute force search for $\Omega$, while still uses the interior-point method to find the optimal $\mathbf{p}$; (3) hybrid solution type 2 (Hybrid2), which uses brute force solution to search $\mathbf{p}$, while uses bisection search to find the optimal $\Omega$. The results are shown in Table \ref{table:time}\footnote{Note that the computational environment will affect the time consumptions. We carry out the comparison under the same computational environment, and the results are also averaged over large amount of realizations.}.

\begin{table*}[!htb]
\begin{center}
\caption{Complexity Comparison of Different Algorithms}
\label{table:time}
\begin{tabular}{|c|c|c|c|c|}
\hline
\diagbox{UE numbers}{Time [s]}{Algorithms} & B-GAIP & \tabincell{c}{Naive\\Brute Force} & \tabincell{c}{Hybrid1:\\Brute Force: $\Omega$\\ \& Interior-Point: $\bf{p}$} & \tabincell{c}{Hybrid2:\\Brute Force: $\bf{p}$\\ \& Bisection: $\Omega$}\\
\hline
1 & 0.0345 & 0.0195 & 0.2237 & 0.0026\\
\hline
2 & 0.1087 & 4.2681 & 0.6989 & 0.4854\\
\hline
3 & 0.1210 & 591.5498 & 0.8010 & 70.1647\\
\hline
4 & 0.1307 & 70590.9569 & 0.8852 & 8696.3573\\
\hline
5 & 0.1359 & $\backslash$ & 0.9625 & $\backslash$\\
\hline
6 & 0.1390 & $\backslash$ & 1.0561 & $\backslash$\\
\hline
7 & 0.1432 & $\backslash$ & 1.1237 & $\backslash$\\
\hline
8 & 0.1459 & $\backslash$ & 1.1867 & $\backslash$\\
\hline
9 & 0.1490 & $\backslash$ & 1.2415 & $\backslash$\\
\hline
10 & 0.1527 & $\backslash$ & 1.3107 & $\backslash$\\
\hline
\end{tabular}
\end{center}
\end{table*}

We can see from the table that both B-GAIP and Hybrid1 algorithms outperform the other two significantly. It is a direct consequence of the fact that brute force search of $\bf{p}$ has an exponential complexity in the number of UEs. We can see that the superiority of B-GAIP over Hybrid1 is mainly due to the use of bisection search method, which reduces the time complexity of optimizing $\Omega$ from a linear order to a logarithmic order.

In addition to the general comparison of complexity scaling, we further look into the effect of number of UEs in the system.  Note that the time consumption of B-GAIP and Hybrid1 when there is only one UE is longer that the others, which is a consequence of the fact that the initialization time of the interior-point method is rather significant. Furthermore, the complexities of B-GAIP and Hybrid1 algorithms are approximately increasing linearly with the number of UEs, which agrees with the analysis in Section \ref{sec:algorithm}.%\footnote{Note that when the number of UEs grows from 1 to 2,  $\bf{p}$ becomes a vector rather than a scalar, which results in more computational complexity in computers.}.

\subsection{Energy Efficiency with Different Dimensions and Layouts}
\label{subsec:dimension_layout}
We now turn our attention to evaluating the energy efficiency performance under different system dimensions and layouts. There are several parameters to adjust, including the number of UEs $K$, the number of BS antennas $M$, the ratio $v=M/K$ and the different network topologies, i.e., centralized or distributed MIMO layouts. We will evaluate the influences of these factors respectively.

\begin{figure*}[!htb]
    \centerline{
        \subfigure[ Large system dimension ]{ 
        \includegraphics[width=0.49\textwidth]{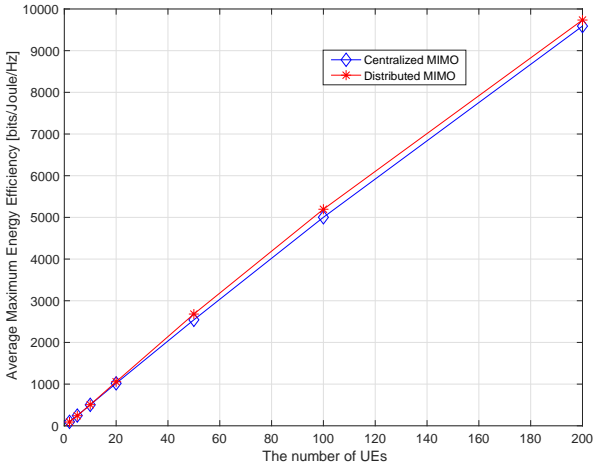}
        \label{fig:keepv_large}}
        \hfil
        \subfigure[ Small system dimension ]{         
        \includegraphics[width=0.49\textwidth]{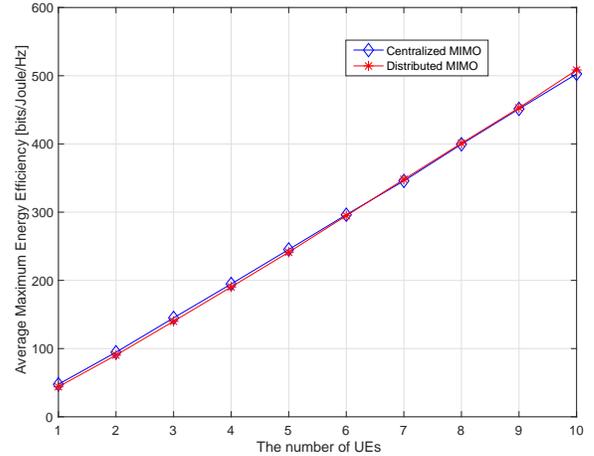}
        \label{fig:keepv_small}}}
    \caption{The effect of system dimension on  energy efficiency with a fixed ratio $v=M/K=2$.}
    \label{fig:keepv}
\end{figure*}

Fig. \ref{fig:keepv_large} shows the effect of system dimensions on the energy efficiency, with both centralized and distributed MIMO layouts. In this simulation, we let $K$ and $M$ grow simultaneously with a fixed ratio $v=M/K=2$. We can observe that the average maximum energy efficiency increases approximately \emph{linearly} with the system dimension. Note that even though the energy efficiency in a distributed MIMO layout outperforms the centralized one when $K$ and $M$ are large, it is actually the opposite when the system dimension is small, as depicted in Fig. \ref{fig:keepv_small} which shows the energy efficiency performance when the number of UEs grows from 1 to 10, while the BS antennas are twice as many as the number of UEs. Intuitively, the phenomenon in Fig. \ref{fig:keepv} can be explained as follows. When the numbers of transmitters and receivers are both small, the average access distance in a centralized layout is  shorter than that the distributed layout. This is because the BS antennas are all co-located in the center of the coverage area in a centralized layout, while in the distributed layout, BS antennas are randomly distributed throughout the cell, which may occasionally result in larger access distance when the number of antennas is very small. On the contrary, when the number of antennas grows, the average access distance in a distributed layout will become shorter than that in centralized layout, which means less energy consumed while delivering the same amount of information. Therefore, the energy efficiency in a distributed layout will eventually outperform the centralized layout as the system dimension grows.

\begin{figure*}[!htb]
    \centerline{
        \subfigure[ $M=400$, $K=2 \sim 200$ ]{ 
        \includegraphics[width=0.49\textwidth]{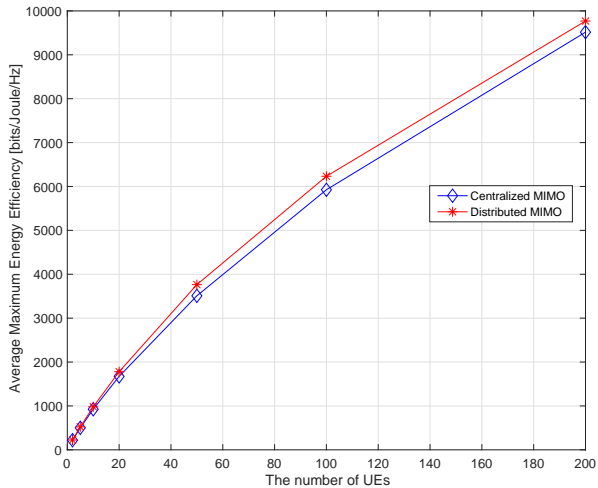}
        \label{fig:changeK}}
        \hfil
        \subfigure[ $K=2$, $M=5 \sim 500$ ]{         
        \includegraphics[width=0.49\textwidth]{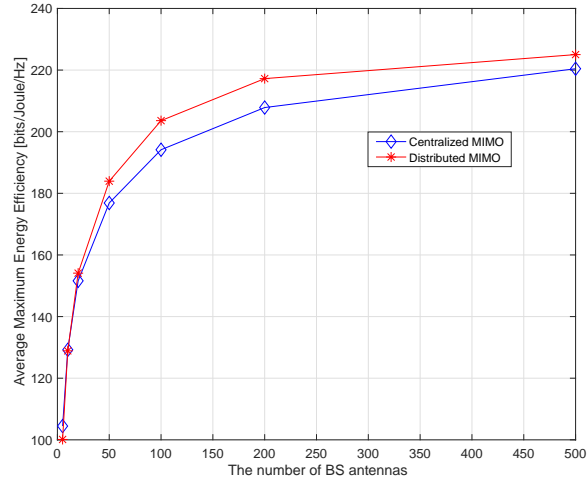}
        \label{fig:changeM}}}
    \caption{The effect of $K$ and $M$ on energy efficiency.}
    \label{fig:changeKM}
\end{figure*}

Fig. \ref{fig:changeKM} further illustrates the effects of $K$ and $M$ on energy efficiency, respectively. We see from Fig. \ref{fig:changeK}  that the number of UEs  has a significant impact on the energy efficiency, while Fig. \ref{fig:changeM} implies that the number of BS antennas has much less effect, especially when $M$ reaches a relatively large value. As a result, the dominating factor that determines the average maximum energy efficiency is $K$. %For practical systems, we may conclude that , we can observe that there is no need to deploy too many BS antennas when the UEs number is limited, for the sake of improving energy efficiency.

%%%%%%%%%%%%%%%%%%%%%%%%%%%%% Sub: The Effect of the Cell Radius %%%%%%%%%%%%%%%%%%%%%%%%%
\subsection{The Effect of Cell Radius}
\label{subsec:radius}

In order to study the effect of cell radius on energy efficiency, we plot Fig. \ref{fig:radius} with different antenna layouts. It is clear that the average maximum energy efficiency decreases as the cell radius increases, which is also observed in some existing papers such as \cite{Miao2010Energy,Zhao2017Energy}. The curve is convex, which suggests that the decrease of the energy efficiency is large when the cell radius is small and the trend will slow down as the cell radius grows. As a side note, we set $K=2$ and $M=4$ in this simulation, which is a relatively small system dimension, and the energy efficiency performance in the centralized layout will outperform the distributed layout. Since a large cell radius means that more transmit power will be needed to convey the same amount of information, it is clear that the energy efficiency will decrease with the growth of the cell radius.

\begin{figure}
\centering
\includegraphics[width=0.48\textwidth]{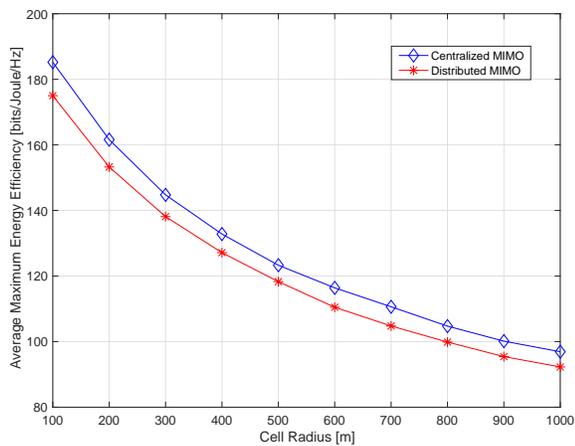}
\caption{The effect of cell radius on energy efficiency, with $K=2$ and $M=4$.}
\label{fig:radius}
\end{figure}

%%%%%%%%%%%%%%%%%%%%%%%%%%%%% Sub: Comparison between Proposed Algorithm and Heuristic Algorithms %%%%%%%%%%%%%%%%%%%%%%%%%
\subsection{Comparison between Proposed Algorithm and Heuristic Algorithms}
\label{subsec:heuristic}
In addition to analyzing the energy efficiency performance of our proposed algorithm, we also compare B-GAIP with two commonly adopted heuristic algorithms \cite{Zhao2017Energy} to evaluate their performance differences in the distributed MIMO setting. Intuitively, higher transmit power and LNA gain shall result in higher SNR, and therefore higher energy efficiency performance. Correspondingly, we consider the following two heuristic algorithms: (1) use the maximal LNA gain combined with the corresponding optimal transmit power vector; and (2) use the maximal transmit power vector combined with the corresponding optimal LNA gain. Note that the engineering constraints in \eqref{limitation1}, \eqref{limitation2}, and \eqref{limitation3} are still enforced when using these heuristic algorithms.

\begin{figure*}[!htb]
    \centerline{
        \subfigure[ $K=2$, $M=4$ ]{ 
        \includegraphics[width=0.49\textwidth]{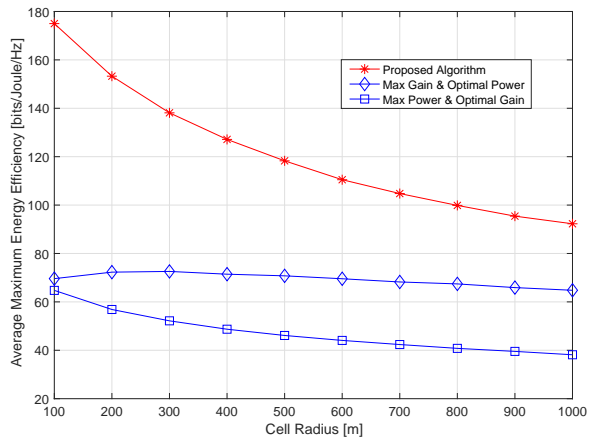}
        \label{fig:Heuristic_Small}}
        \hfil
        \subfigure[ $K=10$, $M=20$ ]{         
        \includegraphics[width=0.49\textwidth]{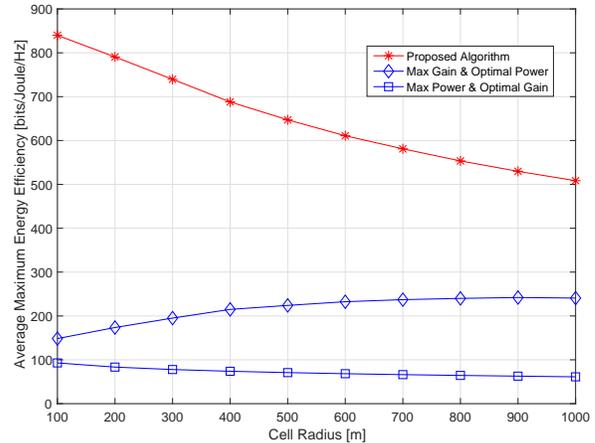}
        \label{fig:Heuristic_Moderate}}}
    \caption{Comparison between B-GAIP and the heuristic algorithms, in both small and moderate system dimensions.}
    \label{fig:Heuristic}
\end{figure*}

Fig. \ref{fig:Heuristic} shows the energy efficiency performance of three algorithms under both small and moderate system dimensions. The performance gap is quite significant, suggesting that invoking B-GAIP to solve the original optimization problem is necessary. In addition, two other important observations can be drawn from Fig. \ref{fig:Heuristic}.
%\begin{itemize}
%\item 

Fig. \ref{fig:Heuristic_Small} depicts a situation where the system dimension is small. It is clear that our proposed algorithm outperforms the heuristic ones with a large margin. At the maximum point, the average maximum energy efficiency value achieved by B-GAIP is 151.4\% higher than that of the heuristic algorithm; while at the minimum point, the advantage becomes 42.4\%, which is still a considerable gain. Meanwhile, we expect an even higher performance gain of B-GAIP when the system dimension becomes larger, since the bounded subspace of the tunable parameters will become larger, which means that there will be more options available for the algorithm to optimize with. To validate this conjecture, we run the simulation and report the corresponding result in Fig. \ref{fig:Heuristic_Moderate}, where a moderate system dimension is used as an example to compare the performance change. We can see from the figure that even at the minimum point, the performance of B-GAIP is still more than twice as that of the heuristic algorithm.
%\item 

Different from the conclusion we draw in Sec. \ref{subsec:radius} that the energy efficiency  will decrease with the growth of cell radius, by using the heuristic algorithm \textit{with maximum LNA gain}, Fig. \ref{fig:Heuristic_Small} shows that the energy efficiency actually first increases and then decreases, and Fig. \ref{fig:Heuristic_Moderate} indicates that the energy efficiency will always increase with the cell radius. While at the first glance, these results seem counterintuitive and may even contradict the analysis in subsection \ref{subsec:radius}, a deeper investigation can explain this phenomenon as follows. 
On the one hand, the cell radius does affect the energy efficiency. If we keep all other parameters, such as LNA gain and transmit power vector, unchanged and only increase the cell radius, the energy efficiency will decrease. However, the constraints in the optimization problem may affect the energy efficiency in a different way. Note that  the maximal LNA gain is used here, which may cause the ADC to saturate rather frequently, and thus reach the maximum SNR limitation . As a result, the main factor that keeps the performance from further growing is the SNR limitation rather than the cell radius in most cases. 

This explanation is numerically validated in Fig. \ref{fig:Heuristic_Moderate} where the average access distance is relatively small and so is the pass loss. In this case, a large LNA gain does not help. The LNA saturation can be mitigated when the cell radius becomes larger, and the performance will increase accordingly. On the other hand, for a small system dimension as Fig. \ref{fig:Heuristic_Small} shows, the antennas are not very crowded even when the cell radius is relatively small. Hence, as the cell radius increases, the  performance first reaches the maximum, and any further-increasing radius will then degrade the energy efficiency performance. %, with the same reason explained in subsection \ref{subsec:radius}.
%\end{itemize}
%To conclude, even though these heuristic algorithms have much lower time complexity than the proposed algorithm, their performance loss is unacceptable, especially in massive MIMO systems.

%%%%%%%%%%%%%%%%%%%%%%%%%%%%% Sub: Comparison between Shared LNA and Separate LNA %%%%%%%%%%%%%%%%%%%%%%%%%
\subsection{Comparison between Shared LNA and Separate LNA}
\label{subsec:shar_sepa}

\begin{figure*}%[!htb]
    \centerline{
        \subfigure[ Centralized MIMO ]{ 
        \includegraphics[width=0.49\textwidth]{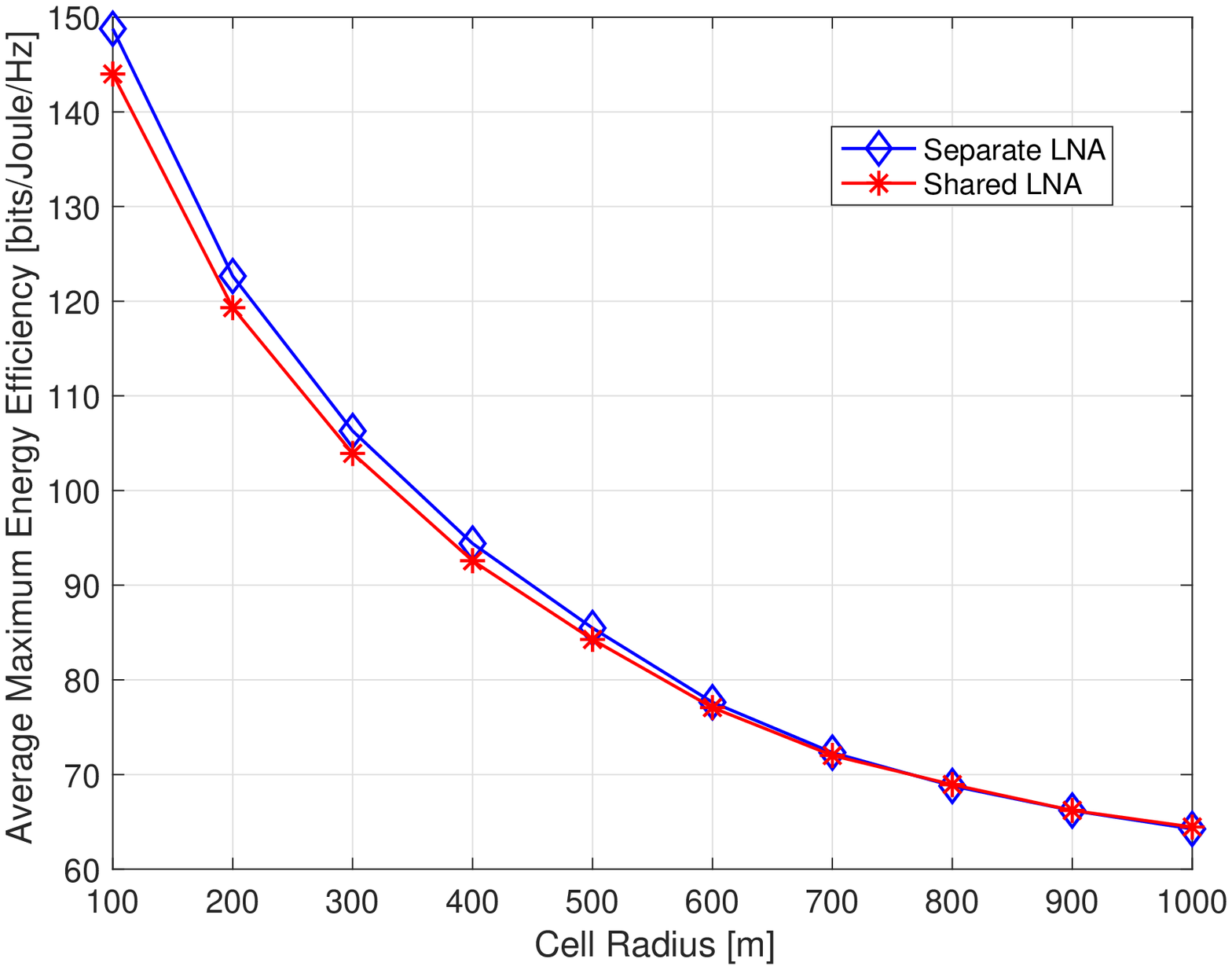}
        \label{fig:SepvsSha_CA}}
        \hfil
        \subfigure[ Distributed MIMO ]{         
        \includegraphics[width=0.49\textwidth]{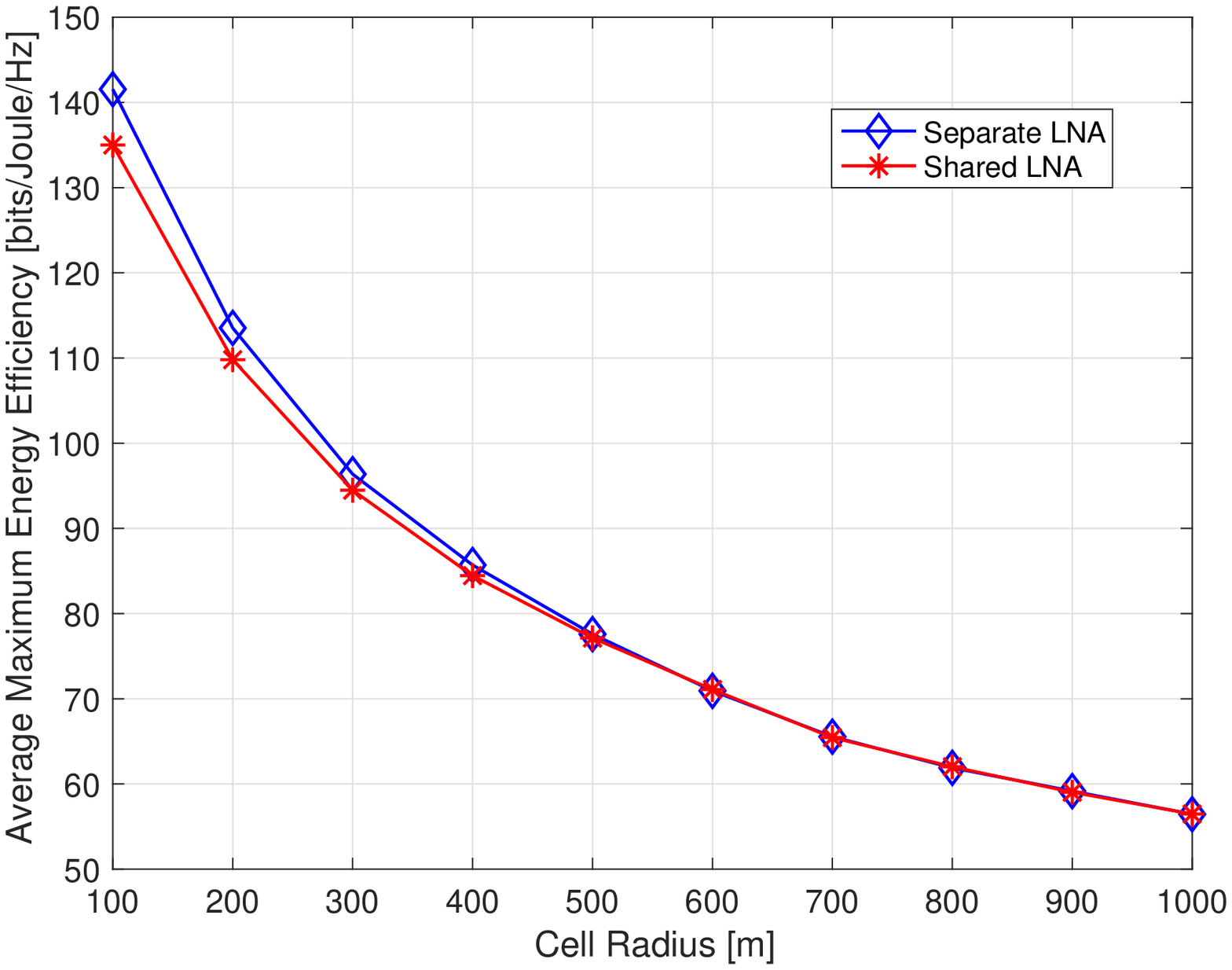}
        \label{fig:SepvsSha_DA}}}
    \caption{Comparison between shared LNA and separate LNA structures under both centralized and distributed MIMO layouts.}
    \label{fig:SepvsSha}
\end{figure*}

Our system model uses one shared LNA to amplify the BS received signals in order to save both implementation cost and power consumption. One natural question is how much performance sacrifice we are incurring compared with using a separate LNA for each RF chain.  In this subsection, we aim to address this question via system simulations.

We first derive energy efficiency under the same system model but with separate LNA at each RF chain. For clarity, we use subscript ``sep'' to denote the signals using the separate LNA structure throughout this subsection. The received signal vector is the same as \eqref{eqn:received signal}, while the amplified signal can be written as
\begin{equation}
\tilde{\mathbf{y}}_{\text{sep}}=\mathbf{\Omega}\mathbf{y},
\end{equation}
where $\mathbf{\Omega}=$ diag($\sqrt{\Omega_1}$ , \ldots , $\sqrt{\Omega_M}$) denotes the LNA gain values  of the $M$ separate LNAs.

After passing $\tilde{\mathbf{y}}_{\text{sep}}$ through separate ADCs for quantization, we also adopt a ZF receiver to process the quantized signal by multiplying it with the pseudo-inverse of the equivalent channel matrix $\hat{\mathbf{G}}=\mathbf{\Omega}\mathbf{G}$, where we have $\hat{\mathbf{F}}\hat{\mathbf{G}}=\mathbf{I}_K$. Now $\mathbf{r}_{\text{sep}}$ is given by
\begin{equation}
\mathbf{r}_{\text{sep}}=\mathbf{Px}+\hat{\mathbf{F}}\mathbf{\Omega}\mathbf{z}+\hat{\mathbf{F}}\mathbf{n}_q.
\end{equation}
Like \eqref{eqn:component}, we also take the $k$th component of the vector $\mathbf{r}_{\text{sep}}$ as an example:
\begin{equation}
r_{\text{sep},k}=\sqrt{p_k}x_k+\hat{\mathbf{f}}_k{}\mathbf{\Omega}\mathbf{z}+\hat{\mathbf{f}}_k\mathbf{n}_q,
\end{equation}
where $\hat{\mathbf{f}}_k$ denotes the $k$th row of matrix $\hat{\mathbf{F}}_k$. As a result, the SNR of the $k$th UE can be calculated as
\begin{equation}
\Gamma_{\text{sep},k}=\frac{p_{k}}{\sigma_N^2\hat{\mathbf{f}}_k\mathbf{\Omega}^2\hat{\mathbf{f}}_k^H+\sigma_{\text{ADC}}^2\|\hat{\mathbf{f}}_k\|^2}.
\end{equation}
Finally, the spectral efficiency and the energy efficiency remain the same as \eqref{eqn:se-def} and \eqref{eqn:ee-def}, respectively. Note that the constraints in \eqref{limitation2} and \eqref{limitation3} will change to the following, while \eqref{limitation1} still holds.

\begin{itemize}
\item ADC saturation limitation:
\begin{equation}
\Omega_m(\mathbf{g}_m\mathbf{P}^2\mathbf{g}_m^H+\sigma_N^2)\leqslant{}P_{\text{max}}^{\text{ADC}},\ m\in\mathcal{M}.
\end{equation}

\item Maximum SNR limitation:
\begin{equation}
\frac{p_{k}}{\sigma_N^2\hat{\mathbf{f}}_k\mathbf{\Omega}^2\hat{\mathbf{f}}_k^H+\sigma_{\text{ADC}}^2\|\hat{\mathbf{f}}_k\|^2}\leqslant{}\Gamma_{\text{max}},\ k\in\mathcal{K}.
\end{equation}
\end{itemize}

Note that the objective function and the constraints under the separate LNA model have  similar properties as Theorem \ref{th:quasi-concave}, i.e., $U_\text{sep}(\mathbf{p},\mathbf{\Omega})$ is a strictly quasi-concave function with convex constraints under a fixed LNA gain matrix $\mathbf{\Omega}$. Therefore, we will use a similar method as Algorithm \ref{alg:GAIP} to find the optimal power allocation vector with given $\mathbf{\Omega}$. However, since the function $U_\text{sep}(\mathbf{\Omega})$ under the separate LNA structure does not have the concavity property proved in Theorem \ref{th:omega-concave}, we resort to the brute force search solution to find the optimal LNA gain values, which has a higher complexity. For the ease of numerical simulations, we only compare the performance under a small system dimension where the numbers of UEs and the BS antennas are set to 2, i.e., $M=K=2$.

Fig. \ref{fig:SepvsSha} reports the comparison of energy efficiency with separate and shared LNA structure in both centralized and distributed MIMO layouts, respectively. We conclude from the figure that while the separate LNA structure achieves a better performance, using a shared LNA structure can very closely approach the performance of the separate LNA. Taking a deeper look at the statistics, we have that the maximum performance loss in a centralized layout is 3.21\%, while this loss becomes 4.62\% in a distributed layout\footnote{It is worth noting that we use the same circuit power, i.e., $P_c=0.1$ W, in both LNA structures, while in reality the separate LNA structure should have more power consumption than the shared LNA structure, which may further degrade its energy efficiency.}. As a result, using a shared LNA can significantly reduce the hardware cost and power consumption, while sacrificing very little  energy efficiency. This result sheds important light on the design of RF front-end power amplifiers in practical MIMO systems.

%%%%%%%%%%%%%%%%%%%%%%%%%%%%%%%%%%%%%%%%%% Conclusions %%%%%%%%%%%%%%%%%%%%%%%%%%%%%%%%%%%%%%%%%
\section{Conclusions}
\label{sec:conclusions}
Energy efficiency is of great importance in modern wireless communications, especially when massive MIMO gains its popularity in 5G systems. The high power consumption of  RF front-end components including LNA and ADC significantly affects the energy efficiency performance in MIMO systems. In this paper, we have proposed a \emph{shared LNA} structure and showed that combined with low-resolution ADCs, this architecture saves both hardware costs and reduces power consumption, while achieving near-optimal performance. In particular, we formulated the energy efficiency maximization problem under real-world engineering constraints, and revealed several important properties of this problem. We then proposed the Bisection -- Gradient Assisted Interior Point (B-GAIP) algorithm that solves the optimization problem precisely and efficiently. The convergence and complexity analysis of B-GAIP have been studied, and comprehensive simulations have been performed to validate the effectiveness of the proposed algorithm. 

%In addition, comprehensive simulations are performed in this paper. From the simulations we can conclude several points including:
%1) The number of UEs is the key factor that decides the system energy efficiency, while the number of BS antennas has less influence on the energy efficiency performance;
%2) When the system dimension is small, the energy efficiency with centralized layout outperforms the distributed layout, while this situation will reverse if the system dimension becomes moderately larger;
%3) The energy efficiency will decrease with the growth of cell radius, in both centralized and distributed MIMO layouts;
%4) The performance loss of shared LNA structure compared with separate LNA structure is relatively small.
%These results can guide the design and deployment of real systems in a number of aspects such as the choice of cell radius, antenna layouts, system scales and LNA structures.

Although massive MIMO under realistic hardware constraints have attracted much attention lately, the existing literature, including this paper, still leaves many problems unsolved. For example, the energy efficiency of a massive MIMO system with imperfect CSI under a similar setting is of great importance in practice but remains unsolved. How to incorporate other practical constraints and system design objectives into this problem is another interesting research topic, which may be worth investigation in the future.

%%%%%%%%%%%%%%%%%%%%%%%%%%%%%%%%%%%%%%%%%% Appendix %%%%%%%%%%%%%%%%%%%%%%%%%%%%%%%%%%%%%%%%%%%%%%%%%%%
\appendices
\section{Proof of Theorem \ref{th:quasi-concave}}
\label{apd:1}
Since our proof is based on a given $\Omega$, we use $U(\mathbf{p})$ instead of $U(\mathbf{p},\Omega)$ to simplify the notation. Furthermore, we define $U(\mathbf{p})$'s super-level set as
\begin{equation}
\label{S_alpha}
S_\alpha=\{\mathbf{p}\succeq\mathbf{0}|U(\mathbf{p})\geqslant\alpha\}.
\end{equation}
By using the equivalent definition of quasi-concavity in \cite{boyd2004convex} that a function $f$ is called strictly quasi-concave if its domain $\mathbb{D}$ and all its super-level sets $S_\alpha=\{\mathbf{x}\in\mathbb{D}|f(\mathbf{x})\geqslant\alpha\}$ are convex, $U(\mathbf{R})$ is strictly quasi-concave if $S_\alpha$ in \eqref{S_alpha} is strictly convex for any real number $\alpha$.

Now it is sufficient to show the quasi-concavity of $U(\mathbf{p})$ if we can prove $S_\alpha$ is convex for positive, negative and zero values of $\alpha$, respectively. Since we have the property that $\mathbf{R}\succeq\mathbf{0}$ and $\mathbf{p}\succeq\mathbf{0}$, $U(\mathbf{p})$ is therefore nonnegative over all possible power vector $\mathbf{p}$. As a result, no points exist on the contour $U(\mathbf{p})=\alpha$ for $\alpha<0$, and only one point $\mathbf{0}$ is on the contour $U(\mathbf{p})=\alpha$ for $\alpha=0$. Hence, $S_\alpha$ is convex when $\alpha\leqslant{}0$. 

To prove the case $\alpha>0$, we first investigate the property of $R_{\text{sum}}\triangleq\sum_{k=1}^{K}R_k$. The proof of the following lemma is straightforward and is omitted.

\begin{lemma}
Recall the definition of $R_{\text{sum}}(\mathbf{p})$:
\begin{equation}
R_{\text{sum}}(\mathbf{p})=\sum_{k=1}^{K}R_k=\sum_{k=1}^{K}\log_2(1+A_d*\frac{\Omega{}p_{k}}{(\Omega\sigma_N^2+\sigma_{\text{ADC}}^2)\|\mathbf{f}_k\|^2}).
\end{equation}
We have that $R_{\text{sum}}(\mathbf{p})$ is concave with respect to the power allocation vector $\mathbf{p}$ and monotonically increasing in each component of the vector $\mathbf{p}$.
\end{lemma}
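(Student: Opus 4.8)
The plan is to exploit the \emph{separability} of $R_{\text{sum}}$: since each SNR $\Gamma_k$ depends only on $p_k$ (for a fixed $\Omega$), we may write $R_{\text{sum}}(\mathbf{p})=\sum_{k=1}^{K}\phi_k(p_k)$, where $\phi_k(t)\triangleq\log_2\!\big(1+c_k t\big)$ and $c_k\triangleq\frac{A_d\,\Omega}{(\Omega\sigma_N^2+\sigma_{\text{ADC}}^2)\|\mathbf{f}_k\|^2}>0$. The whole claim then reduces to a one-dimensional statement about each $\phi_k$, lifted coordinate-wise to $\mathbb{R}^K$.

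First I would establish that each $\phi_k$ is concave and strictly increasing on $[0,\infty)$. This is immediate either by direct differentiation — $\phi_k'(t)=\frac{c_k}{\ln 2\,(1+c_k t)}>0$ and $\phi_k''(t)=-\frac{c_k^2}{\ln 2\,(1+c_k t)^2}<0$ — or by observing that $\phi_k$ is the composition of the concave, nondecreasing map $u\mapsto\log_2 u$ with the affine, increasing map $t\mapsto 1+c_k t$, a composition that preserves both concavity and monotonicity \cite{boyd2004convex}.

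Next I would lift this to the vector variable. For each $k$, the map $\mathbf{p}\mapsto\phi_k(p_k)$ is concave on $\mathbb{R}^K$, because along any line $\mathbf{p}(s)=\mathbf{p}_0+s\mathbf{v}$ its restriction is $s\mapsto\phi_k(p_{0,k}+s v_k)$, i.e.\ the concave function $\phi_k$ composed with an affine reparametrization, hence concave; a function is concave precisely when its restriction to every line is concave. Since a sum of concave functions is concave, $R_{\text{sum}}(\mathbf{p})=\sum_k\phi_k(p_k)$ is concave in $\mathbf{p}$. Monotonicity in each component is then simply $\partial R_{\text{sum}}/\partial p_k=\phi_k'(p_k)>0$ for $p_k\geqslant 0$.

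There is essentially no hard step here — the statement is routine — but the one point worth a remark is that we are using the \emph{uncapped} form of $R_k$, which is legitimate precisely because the SNR cap $\Gamma_{\text{max}}$ in \eqref{eqn:se-def} is imposed separately as the linear constraint \eqref{limitation3}: on the feasible set of \eqref{eqn:qc_opt_prb} the capped and uncapped expressions coincide, so no loss of smoothness or concavity occurs. (Even if the cap were folded into the objective, $\min\{\phi_k(p_k),\log_2(1+A_d\Gamma_{\text{max}})\}$ would remain concave, being the minimum of an increasing concave function and a constant, whose left slope at the crossover point dominates the zero slope thereafter.)
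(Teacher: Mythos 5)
Your proof is correct. The paper itself omits the argument entirely (it states that the proof of this lemma ``is straightforward and is omitted''), and what you supply is exactly the standard reasoning the authors presumably had in mind: separability of $R_{\text{sum}}$ into one-dimensional terms $\phi_k(p_k)=\log_2(1+c_k p_k)$, concavity and strict monotonicity of each $\phi_k$ (by direct differentiation or by composition of $\log_2$ with an increasing affine map), coordinate-wise lifting to $\mathbb{R}^K$, and closure of concavity under sums. Your closing remark — that the uncapped expression for $R_k$ is the right one to use because the SNR cap of \eqref{eqn:se-def} is enforced separately through the linear constraint \eqref{limitation3}, on whose feasible set the capped and uncapped forms coincide — is a useful clarification of a point the paper leaves implicit, but it does not change the substance: your route and the paper's intended route are the same.
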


%Recall the definition of $R_{\text{sum}}(\mathbf{p})$:
%\begin{equation}
%R_{\text{sum}}(\mathbf{p})=\sum_{k=1}^{K}R_k=\sum_{k=1}^{K}\log_2(1+A_d*\frac{\Omega{}p_{k}}{(\Omega\sigma_N^2+\sigma_{\text{ADC}}^2)\|\mathbf{f}_k\|^2}).
%\end{equation}
%It is straightforward to verify the concavity of $R_{\text{sum}}(\mathbf{p})$  and the monotonicity of $R_{\text{sum}}(\mathbf{p})$ on each component of vector $\mathbf{p}$ is obvious as well.

We define $P_{\text{sum}}\triangleq\sum_{k=1}^{K}p_k/\eta$, which is a linear function on $\mathbf{p}$. Since $R_{\text{sum}}(\mathbf{p})$ is concave on  $\mathbf{p}$, $S_\alpha$ is  convex because $S_\alpha$ is equivalent to $S_\alpha=\{\mathbf{p}\succeq\mathbf{0}|\alpha{}P_c+\alpha{}P_{\text{sum}}-R_{\text{sum}}(\mathbf{p})\leqslant{}0\}$. Therefore,  the strict quasi-concavity of $U(\mathbf{p})$ is proved. %Since the constraints on the power allocation vector $\mathbf{p}$ are convex and using Lemma \ref{lemma:constr}, the proposed problem is a constrained quasi-concave optimization problem.

\section{Proof of Theorem \ref{th:omega-concave}}
\label{apd:2}
For a given power vector $\mathbf{p}$, we can calculate the first derivative of $U(\mathbf{p},\Omega)$ as
\begingroup
\makeatletter\def\f@size{9}\check@mathfonts
\def\maketag@@@#1{\hbox{\m@th\large\normalfont#1}}%
\begin{align}
\notag & \frac{\partial{}U(\mathbf{p},\Omega)}{\partial\Omega} =\\
\notag &\sum_{k=1}^{K}\frac{A_d\sigma_{\text{ADC}}^2{}p_k}{\ln2(P_c+P_{\text{sum}})\left[(\sigma_N^2\Omega+\sigma_{\text{ADC}}^2)\|\mathbf{f}_k\|^2+A_d{}p_k\Omega\right](\sigma_N^2\Omega+\sigma_{\text{ADC}}^2)}\\
\notag &\triangleq{}\sum_{k=1}^{K}\frac{\mathsf{C}}{\mathsf{P}(\Omega)},
\end{align}
\endgroup
where $\mathsf{C}=A_d\sigma_{\text{ADC}}^2{}p_k$ is a positive constant and $\mathsf{P}(\Omega)$ represents a quadratic polynomial of $\Omega$ with all coefficients positive. As a result, we can calculate the second derivative of $U(\mathbf{p},\Omega)$ as
\begin{equation}
\frac{\partial{}^2{}U(\mathbf{p},\Omega)}{\partial\Omega^2}=\sum_{k=1}^{K}\frac{-\mathsf{C}\mathsf{P}'(\Omega)}{\mathsf{P}^2(\Omega)}<0,
\end{equation}
which indicates that $U(\mathbf{p},\Omega)$ is a concave function in $\Omega$ under a given $\mathbf{p}$.

Recall that we use $U(\Omega)$ to denote the maximum energy efficiency under all feasible power vector $\mathbf{p}$. Since the maximum of a set of concave functions is still concave, we have proved the concavity of $U(\Omega)$ on $\Omega$.

\bibliographystyle{IEEEtran}
\bibliography{DAS_LNA2}
\end{document}